\definecolor{lightblue}{rgb}{0.55,0.55,1.0}
\newcommand{\defs}{\stackrel{{\scriptscriptstyle \fun{def}}}{=}}
\renewcommand{\iota}{}
\newcommand{\piout}[2]{\overline{#1}{\langle{#2}\rangle}}
\newcommand{\piin}[2]{#1(#2)}
\newcommand{\pim}[2]{[#1=#2]}
\newcommand{\sep}{\shortmid}
\newcommand{\equi}{{\leftrightarrow}}
\newcommand{\compat}[1]{{\stackrel{#1}{\leftrightarrow}}}
\newcommand{\fun}[1]{\mathsf{#1}}
\newcommand{\powerset}[1]{\mathbb{P}({#1})}
\newcommand{\redex}[1]{\fbox{\ensuremath{#1}}}
\newcommand{\bisim}{\sim}
\newcommand{\llbracket}{\ensuremath{[\![}}
\newcommand{\rrbracket}{\ensuremath{]\!]}}
\newtheorem{theorem}{Theorem}
\newtheorem{lemma}{Lemma}
\newtheorem{definition}{Definition}
\newtheorem{proposition}{Proposition}
\newcommand{\suchthat}{\mid}
\newcommand{\vect}[1]{\overrightarrow{#1}}
\newif\ifcomment
\definecolor{gris}{gray}{0.3}
\newcommand{\drop}[1]{}
\title{A Decidable Characterization\\ of a Graphical Pi-calculus with Iterators
\\} 
\author{Fr\'ed\'eric Peschanski
\institute{UPMC -- LIP6}
\and Hanna Klaudel\institute{Universit\'e Evry--Ibisc}
\and Raymond Devillers\institute{Universit\'e Libre de Bruxelles}}
\begin{document}
\maketitle

\begin{abstract}
  This paper presents the Pi-graphs, a visual paradigm for the
  modelling and verification of mobile systems. The language is a
  graphical variant of the Pi-calculus with iterators to express
  non-terminating behaviors. The operational semantics of Pi-graphs
  use ground notions of labelled transition and bisimulation, which
  means standard verification techniques can be applied. We show that
  bisimilarity is decidable for the proposed semantics, a result
  obtained thanks to an original notion of causal clock as well as the
  automatic garbage collection of unused names.
\end{abstract}

\section{Introduction}



The $\pi$-graphs is a visual paradigm loosely inspired by the
Petri nets. It is a graphical variant of the
$\pi$-calculus~\cite{pi:book} with similar constructs and semantics.
The formalism is designed as both a \emph{modelling} language and a
\emph{verification} framework. 

The design of a graphical modelling
language has subjective motivations: intuitiveness, aesthetics, etc.
One design choice we retain from mainstream visual languages (UML,
Petri nets, etc.) is \emph{staticness}: the preservation of the
diagrammatic structure along transitions. Most graphical
interpretations of the $\pi$-calculus involve dynamic diagrams:
nodes and edges are created/deleted along
transitions~\cite{DBLP:conf/esop/Milner94,DBLP:journals/njc/Parrow95,gadducci:pigraphenc:mscs:2006}.
In contrast, the structure of the $\pi$-graphs does not evolve over
time. The idea is to ``move'' names around a static graph, using an
inductive variant of \emph{graph
  relabelling}~\cite{graph:relabelling:handbook}. For non-terminating
behaviors, we use \emph{iterators}~\cite{DBLP:conf/icalp/BusiGZ04},
 a suitable static substitute for control-finite recursion.

 Beyond modelling, our second axle of research is verification with
 more objective goals. One difficulty is that the usual semantic
 variants of the $\pi$-calculus (early, late, open) rely on non-ground
 transition systems and/or bisimulation relations, which leads to
 specific and rather non-trivial verification techniques
 e.g.~\cite{DBLP:journals/iandc/PistoreS01,DBLP:conf/cav/VictorM94,DBLP:conf/cav/FerrariGMPR98}.
 The $\pi$-graphs, on the contrary, use ground notions of labelled
 transition and bisimulation, which means \emph{standard} verification
 techniques can be applied.  Of course, there is no magic, the
 ``missing'' information is recorded somewhere. First, each
 $\pi$-graph state is attached to a \emph{clock}. As explained in
 \cite{DBLP:conf/sofsem/PeschanskiB09}, the clock is used for the
 generation of names that are guaranteed fresh by construction. It is
 also used to characterize a form of \emph{read-write
   causality}~\cite{DBLP:conf/icalp/DeganoP95}. Moreover, the match
 and synchronization constructs are interpreted as the dynamic
 construction of a \emph{partition} deciding equality for names. 
 
  There are, however, two sources of \emph{infinity} in the proposed
 model. First, the logical clocks (used in \cite{DBLP:conf/sofsem/PeschanskiB09})
  can grow infinitely. Moreover,
   the generated fresh names are never reclaimed. This means that infinite state spaces
    can be constructed even for very simple iterative behaviors.
  To avoid the construction of
 infinite state spaces, we first introduce 
 an original (and non-trivial) model of causal clocks, which provide a more structured
  characterization of read-write causality. 
 As a second ``counter-measure'' against infinity, we develop an automatic garbage collection scheme for unused names in  graphs.
 As a
 major result, we show that bisimilarity is decidable for the proposed
 semantics. 
 
The outline of the paper is as follows. In Section~\ref{sec:language} we introduce the diagram language
and the corresponding process algebra. In Section~\ref{sec:semantics} the operational semantics is
proposed. The finiteness results are developed in Section~\ref{sec:results}. Related work is
discussed in Section~\ref{sec:related}.

\section{The diagram language and process algebra}
\label{sec:language}

The $\pi$-graphs is a visual language inspired by
(elementary) 
Petri nets. The control flow is characterized by interconnected places
with token marks. A data-part models the names and channels used by
the processes to interact. This is realized by placeholders called
\emph{boxes} that can be instantiated by names.  Places and boxes
cannot be arranged arbitrarily, and the $\pi$-graphs must conform to
the syntax described in Table~\ref{tab:syntax} (see page~\pageref{tab:syntax}). 
The basic syntactic elements are roughly the ones of the $\pi$-calculus
(cf. \cite{pi:book}): input, output, silent action, non-deterministic
choice and parallel compositions. A notable difference is that most of
the constructs (even match, parallel and sum) are considered in prefix
position. Moreover, the process expressions must be suffixed by an
explicit \emph{termination} $0$.

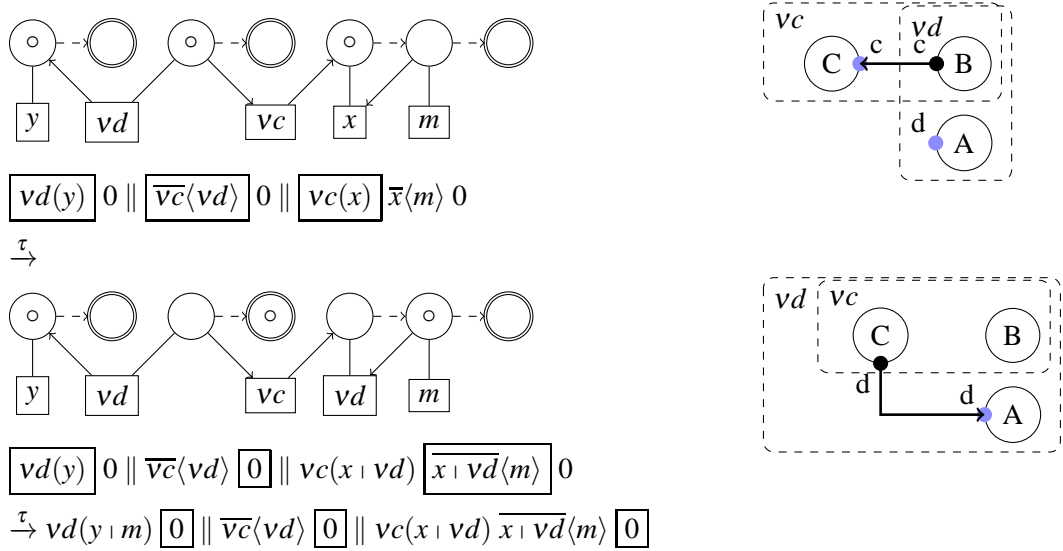
\begin{figure}[t]
\begin{center}
\begin{tabular}{m{0.6\textwidth}m{0.4\textwidth}}
   \begin{minipage}{0.6\textwidth}
      \begin{tikzpicture}[node distance=30pt]
      \node[circle,draw] (B) {$\circ$};
        \node[circle,draw,right of=B,node distance=60pt] (A) {$\circ$};
        \node[circle,draw,left of=B,node distance=60pt] (C) {$\circ$};
        \node[circle,draw,right of=B,double] (B2) {\phantom{$\circ$}};
        \node[rectangle,draw,below of=B2] (nc) {$\nu c$};
        \node[rectangle,draw, below of=A] (ix) {$\iota x$};
       \node[circle,draw,right of=A] (A2) {\phantom{$\circ$}};
    \node[circle,draw,right of=A2,double] (A3) {\phantom{$\circ$}};
    \node[rectangle,draw, below of=A2] (m) {$m$};
        \node[circle,draw,right of=C,double] (C2) {\phantom{$\circ$}};
        \node[rectangle,draw,below of=C2] (nd) {$\nu d$};
        \node[rectangle,draw, below of=C] (iy) {$\iota y$};
        \path[->] (B) edge (nc);
        \path[->] (nc) edge (A);
        \draw (C) to (iy);
        \draw (nd) to (B);
        \draw (A) to (ix);
        \draw (A2) to (m);
        \path[->] (nd) edge (C);
        \path[->] (A2) edge (ix);
       \draw[dashed,->] (A) to (A2);
        \draw[dashed,->] (A2) to (A3);
        \draw[dashed,->] (B) to (B2);
        \draw[dashed,->] (C) to (C2);
    \end{tikzpicture} \vspace{10pt}

      $\redex{\piin{\nu d}{\iota y}}~0 \parallel \redex{\piout{\nu c}{\nu d}}~0 
 \parallel \redex{\piin{\nu c}{\iota x}}~\piout{\iota x}{m}~0$
\end{minipage}
     &

\begin{tikzpicture}[node distance=50pt]
\node[circle,draw] (a) {C};
\fill[color=lightblue] (a.east) circle(.1cm);
\node[anchor=south west] (ac) at (a.east) {c};
\node[circle,draw,right of=a] (b) {B};
\fill[color=black] (b.west) circle(.1cm);
\node[anchor=south east] (bd) at (b.west) {c};
\node[node distance=20pt, below right of=b] (bb) {};
\node[circle,draw,below of=b, node distance=30pt] (c) {A};
\node[node distance=20pt, below right of=c] (cc) {};
\node[rectangle,anchor=north west, node distance=23pt, above left of=a] (pc) {${\nu}c$};
\draw[draw,dashed,rounded corners] (pc.north west) rectangle (bb);
\node[rectangle,anchor=north west, node distance=20pt, above left of=b] (pd)  {${\nu}d$};%
\draw[draw,dashed,rounded corners] (pd.north west) rectangle (cc.east);
\fill[color=lightblue] (c.west) circle(.1cm);
\node[anchor=south east] (cd) at (c.west) {d};
\draw[->,line width=1pt] (b) to (a);
\end{tikzpicture}

\\
$\xrightarrow{\tau}$
\\
      \begin{tikzpicture}[node distance=30pt]
      \node[circle,draw] (B) {\phantom{$\circ$}};
        \node[circle,draw,right of=B,node distance=60pt] (A) {\phantom{$\circ$}};
        \node[circle,draw,left of=B,node distance=60pt] (C) {$\circ$};
        \node[circle,draw,right of=B,double] (B2) {$\circ$};
        \node[rectangle,draw,below of=B2] (nc) {$\nu c$};
        \node[rectangle,draw, below of=A] (ix) {$\nu d$};
       \node[circle,draw,right of=A] (A2) {$\circ$};
    \node[circle,draw,right of=A2,double] (A3) {\phantom{$\circ$}};
    \node[rectangle,draw, below of=A2] (m) {$m$};
        \node[circle,draw,right of=C,double] (C2) {\phantom{$\circ$}};
        \node[rectangle,draw,below of=C2] (nd) {$\nu d$};
        \node[rectangle,draw, below of=C] (iy) {$\iota y$};
        \path[->] (B) edge (nc);
        \path[->] (nc) edge (A);
        \draw (C) to (iy);
        \draw (nd) to (B);
        \draw (A) to (ix);
        \draw (A2) to (m);
        \path[->] (nd) edge (C);
        \path[->] (A2) edge (ix);
       \draw[dashed,->] (A) to (A2);
        \draw[dashed,->] (A2) to (A3);
        \draw[dashed,->] (B) to (B2);
        \draw[dashed,->] (C) to (C2);
    \end{tikzpicture} \vspace{10pt}

$\redex{\piin{\nu d}{\iota y}}~0 \parallel \piout{\nu c}{\nu d}~\redex{0} 
 \parallel \piin{\nu c}{\iota x\sep \nu d}~\redex{\piout{\iota x\sep \nu d}{m}}~0$
& 
\begin{tikzpicture}[node distance=50pt]
\node[circle,draw] (a) {C};
\node[circle,draw,right of=a] (b) {B};
\fill[color=black] (a.south) circle(.1cm);
\node[anchor=north east] (ad) at (a.south) {d};
\node[node distance=20pt, below right of=b] (bb) {};
\node[circle,draw,below of=b, node distance=30pt] (c) {A};
\node[node distance=20pt, below right of=c] (cc) {};
\node[below of=a, node distance=30pt] (ac) {};
\node[rectangle,anchor=north west, node distance=20pt, above left of=a] (pc) {${\nu}c$};
\draw[draw,dashed,rounded corners] (pc.north west) rectangle (bb);
\node[rectangle,anchor=north west, node distance=20pt, left of=pc] (pd)  {${\nu}d$};%
\draw[draw,dashed,rounded corners] (pd.north west) rectangle (cc.east);
\fill[color=lightblue] (c.west) circle(.1cm);
\node[anchor=south east] (cd) at (c.west) {d};
\draw[->,line width=1pt] (a) -- (ac.center) -- (c);
\end{tikzpicture}

\\
 
\multicolumn{2}{l}{$\xrightarrow{\tau} \piin{\nu d}{\iota y \sep m}~\redex{0} \parallel \piout{\nu c}{\nu d}~\redex{0} 
 \parallel \piin{\nu c}{\iota x\sep \nu d}~\piout{\iota x\sep \nu d}{m}~\redex{0}$}

\end{tabular}
\end{center}
\caption{\label{fig:ex:mobility} Example with mobility (abridged)}
\end{figure}

As an illustration, consider the example of
Figure~\ref{fig:ex:mobility}. This is an archetype of the kind of
mobility involved in the $\pi$-calculus. The (extract of) $\pi$-graph
on the left describes three processes - $A$ (left), $B$ (center) and
$C$ (right) - evolving concurrently. The current state of each process is characterised by
a token mark in the corresponding place. In the term representation given below the
graph, each prefix in redex position corresponds to such a place with a
token mark. We depict this by surrounding the prefix with a frame. To
establish the link with the $\pi$-calculus, we added on the right a flowgraph
representation of the system (cf. \cite{pi:book}). The processes $B$ and $C$ share a private channel $\nu c$ and in the
first step, $B$ communicates with $C$ using this channel. The
transmitted data is another private channel $\nu d$, initially only
known by $A$ and $B$. We are thus in a situation of \emph{channel
  passing}. The process $C$ binds the received name (here $\nu d$) to
the box $\iota x$. In the term representation, the instantiation is
made explicit with the notation $\iota x \sep \nu d$.  The left name
is the identifier for the box in the graph, which is a \emph{static}
information, and the right name describes its \emph{dynamic}
instantiation. As a convenience, the default instantiation $n \sep n$
is simply denoted $n$.  The corresponding flowgraph shows the
\emph{scope extrusion} of the channel $\nu d$ so that it encompasses
$C$. In the last step there is a synchronization between $C$ and $A$
along $\nu d$ with the communication of the datum $m$ (only the term is depicted). 

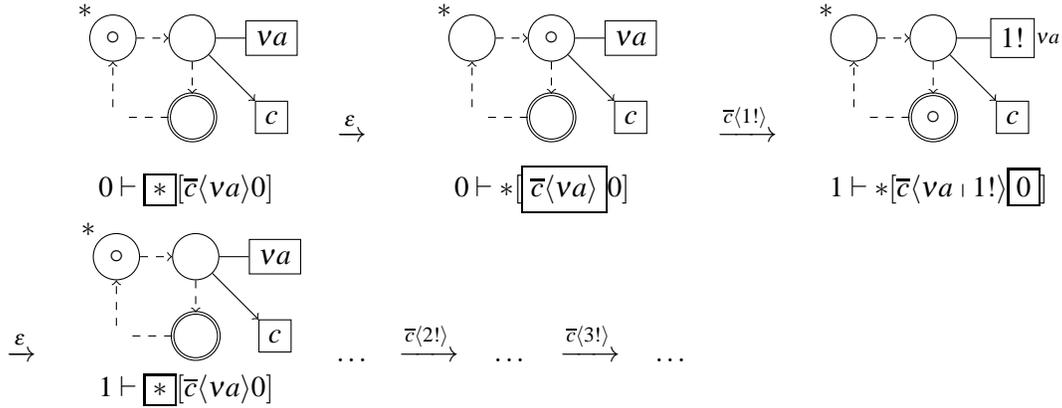
\begin{figure}[htb]
\begin{center}
\begin{tabular}{cccccccc}
    &
     \begin{tikzpicture}[node distance=30pt]
        \node[circle,draw] (iter) {$\circ$};
        \draw node[above left=4pt] at (iter) {$*$};
        \node[circle,draw,right of=iter] (A) {\phantom{$\circ$}};
        \node[circle,draw,below of=A,double] (B) {\phantom{$\circ$}};
        \node[rectangle,draw,right of=A] (na) {$\nu a$};
        \node[rectangle,draw, below of=na] (c) {$c$};
        \node[below of=iter] (loop) {};
        \path[->] (A) edge (c);
        \draw (A) to (na);
        \draw[dashed,->] (iter) to (A);
        \draw[dashed,->] (A) to (B);
        \draw[dashed] (B) to (loop);
        \draw[dashed,->] (loop) to (iter);
    \end{tikzpicture} \vspace{5pt}

    &
     $\xrightarrow{\varepsilon}$
    &
     \begin{tikzpicture}[node distance=30pt]
        \node[circle,draw] (iter) {\phantom{$\circ$}};
        \draw node[above left=4pt] at (iter) {$*$};
        \node[circle,draw,right of=iter] (A) {$\circ$};
        \node[circle,draw,below of=A,double] (B) {\phantom{$\circ$}};
        \node[rectangle,draw,right of=A] (na) {$\nu a$};
        \node[rectangle,draw, below of=na] (c) {$c$};
        \node[below of=iter] (loop) {};
        \path[->] (A) edge (c);
        \draw (A) to (na);
        \draw[dashed,->] (iter) to (A);
        \draw[dashed,->] (A) to (B);
        \draw[dashed] (B) to (loop);
        \draw[dashed,->] (loop) to (iter);
    \end{tikzpicture}
     &
     $\xrightarrow{\piout{c}{1!}}$
    &
     \begin{tikzpicture}[node distance=30pt]
        \node[circle,draw] (iter) {\phantom{$\circ$}};
        \draw node[above left=4pt] at (iter) {$*$};
        \node[circle,draw,right of=iter] (A) {\phantom{$\bullet$}};
        \node[circle,draw,below of=A,double] (B) {$\circ$};
        \node[rectangle,draw,right of=A] (na) {$1!$};
         \draw node[right=5pt] at (na) {${\scriptstyle \nu a}$};
        \node[rectangle,draw, below of=na] (c) {$c$};
        \node[below of=iter] (loop) {};
        \path[->] (A) edge (c);
        \draw (A) to (na);
        \draw[dashed,->] (iter) to (A);
        \draw[dashed,->] (A) to (B);
        \draw[dashed] (B) to (loop);
        \draw[dashed,->] (loop) to (iter);
    \end{tikzpicture}
     \\

 & $0\vdash \redex{*}[\piout{c}{\nu a}0]$ &

 & $0\vdash *[\redex{\piout{c}{\nu a}}0]$ &

 & $1\vdash *[\piout{c}{\nu a \sep 1!}\redex{0}]$ 
\\
  $\xrightarrow{\varepsilon}$ & 

     \begin{tikzpicture}[node distance=30pt]
        \node[circle,draw] (iter) {$\circ$};
        \draw node[above left=4pt] at (iter) {$*$};
        \node[circle,draw,right of=iter] (A) {\phantom{$\circ$}};
        \node[circle,draw,below of=A,double] (B) {\phantom{$\circ$}};
        \node[rectangle,draw,right of=A] (na) {$\nu a$};
        \node[rectangle,draw, below of=na] (c) {$c$};
        \node[below of=iter] (loop) {};
        \path[->] (A) edge (c);
        \draw (A) to (na);
        \draw[dashed,->] (iter) to (A);
        \draw[dashed,->] (A) to (B);
        \draw[dashed] (B) to (loop);
        \draw[dashed,->] (loop) to (iter);
    \end{tikzpicture}
& $\ldots$ &
$\xrightarrow{\piout{c}{2!}} \quad \ldots \quad \xrightarrow{\piout{c}{3!}}\quad \ldots$ 
\\
 & $1\vdash \redex{*}[\piout{c}{\nu a}0]$ &

  \end{tabular}
\end{center}
\caption{\label{fig:ex:iterator} Example with an iterator 
(abridged) }
\end{figure}

It is possible to express non-terminating behaviors with $\pi$-graphs
using \emph{iterators}~\cite{DBLP:conf/icalp/BusiGZ04}. The example of
Figure~\ref{fig:ex:iterator} is a $\pi$-graph encoding a generator of
fresh names. The iterator place is denoted $*$, which is marked in the
first step. An iteration is started with an $\varepsilon$ transition, a
low-level \emph{normalization step}. In the term representation, each
state is attached to a \emph{clock}. As
in~\cite{DBLP:conf/sofsem/PeschanskiB09} we can use \emph{logical
  clocks} to generate names that are guaranteed fresh by construction.
Consider the second transition on the figure.  The redex is the output
of the private name $\nu a$ on the public channel $c$. The effect of
transmitting a private name over a public channel (a \emph{bound
  output}) must be recorded.  The box of the formerly private name
$\nu a$ is then instantiated with $1!$ which is the new identity of
the name.  The generated name is the current value of the clock plus
one suffixed by $!$ to mark the output (a suffix $?$ is used for fresh
inputs). It is guaranteed fresh by construction and, to ensure this, the
clock itself is incremented by one.  The observation is
\emph{recorded} as a transition labelled $\piout{c}{1!}$, and we reach
the terminating place $0$. The iterator is then reactivated, and
during this step the box $\nu a$ is reinitialized to its default value
$\nu a\sep \nu a$. This makes the name $\nu a$ \emph{locally private}
to the iterator. There are also global private names or
\emph{restrictions} as in CCS, denoted e.g. $\nu A, \nu B\ldots$ These
are not reinitialized before the start of new iterations. We are now
in the same state except the value of the clock was incremented by
one.  Thus, if we continue iterating the behavior, the recorded
observations will be $\piout{c}{2!}$, $\piout{c}{3!}$, etc.  resulting
in an infinite generation of distinct names.

The last example, cf. Figure~\ref{fig:match:ex}, illustrates the interpretation of the match prefix.
We study the evolution of the following behavior: $\piout{c}{\nu a}\piin{d}{\iota x}[\nu a=\iota x]P$.
To anticipate the semantics of Table~\ref{tab:semantics} (see page~\pageref{tab:semantics}), we also indicate the names of the inferred transitions in the Figure. 

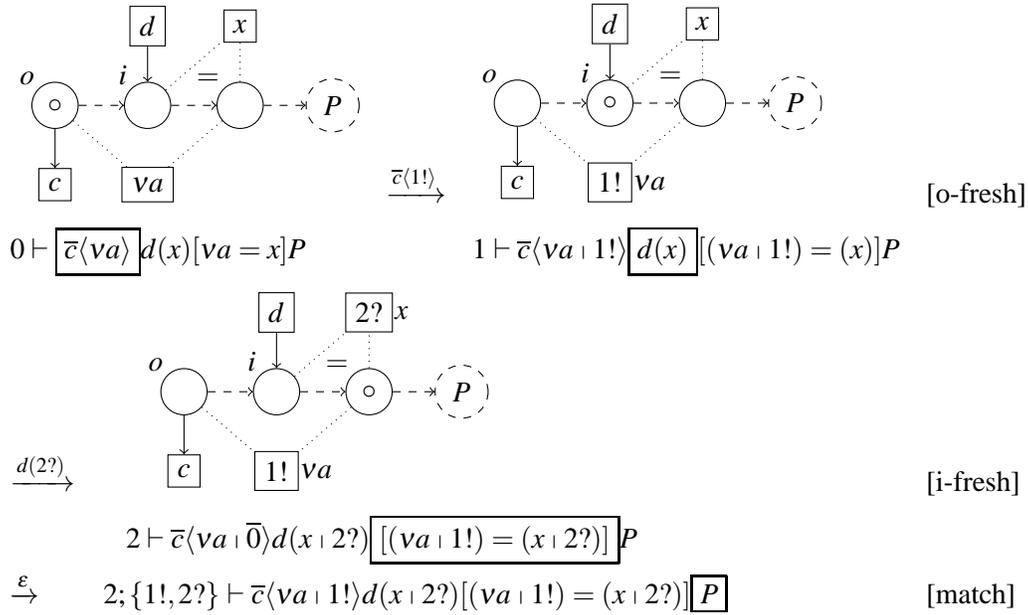
\begin{figure}[htb]
\begin{center}
\[\begin{array}{llll}
\begin{tikzpicture}[node distance=30pt]
    \node[circle,draw] (a) {$\circ$};
       \draw node[above left=4pt] at (a) {$o$};
    \node[circle,draw,right of=a,node distance=35pt] (b) {\phantom{$\circ$}};
       \draw node[above left=4pt] at (b) {$i$};
    \node[circle,draw,right of=b,node distance=35pt] (c) {\phantom{$\circ$}};
       \draw node[above left=4pt] at (c) {$=$};
    \node[circle,draw,dashed,right of=c,node distance=35pt] (d) {$P$};
    \draw[dashed,->] (a) to (b);
    \draw[dashed,->] (b) to (c);
    \draw[dashed,->] (c) to (d);

    \node[rectangle,draw,below of=a] (bc) {$c$};
    \node[rectangle,draw, below of=b] (na) {$\nu a$};
    \path[->] (a) edge (bc);
    \draw[dotted] (a) to (na);
    \node[rectangle,draw,above of=b] (bd) {$d$};
    \node[rectangle,draw, above of=c] (ix) {$\iota x$};
    \path[->] (bd) edge (b);
    \draw[dotted] (b) to (ix);
    \draw[dotted] (c) to (ix);
    \draw[dotted] (c) to (na);
\end{tikzpicture}
&
\xrightarrow{\piout{c}{1!}} 
&
\begin{tikzpicture}[node distance=30pt]
    \node[circle,draw] (a) {\phantom{$\circ$}};
       \draw node[above left=4pt] at (a) {$o$};
    \node[circle,draw,right of=a,node distance=35pt] (b) {$\circ$};
       \draw node[above left=4pt] at (b) {$i$};
    \node[circle,draw,right of=b,node distance=35pt] (c) {\phantom{$\circ$}};
       \draw node[above left=4pt] at (c) {$=$};
    \node[circle,draw,dashed,right of=c,node distance=35pt] (d) {$P$};
    \draw[dashed,->] (a) to (b);
    \draw[dashed,->] (b) to (c);
    \draw[dashed,->] (c) to (d);

    \node[rectangle,draw,below of=a] (bc) {$c$};
    \node[rectangle,draw, below of=b] (na) {$1!$};
       \draw node[right=5pt] at (na) {$\scriptsize \nu a$};
    \path[->] (a) edge (bc);
    \draw[dotted] (a) to (na);
    \node[rectangle,draw,above of=b] (bd) {$d$};
    \node[rectangle,draw, above of=c] (ix) {$\iota x$};
    \path[->] (bd) edge (b);
    \draw[dotted] (b) to (ix);
    \draw[dotted] (c) to (ix);
    \draw[dotted] (c) to (na);
\end{tikzpicture}
&
\textrm{[o-fresh]}\\[.2cm]
0 \vdash \redex{\piout{c}{\nu a}}\piin{d}{\iota x}[\nu a=\iota x]P  
&
&
1  \vdash \piout{c}{\nu a\sep 1!}\redex{\piin{d}{\iota x}}[(\nu a\sep 1!)=(\iota x)]P  
& 
\\[.3cm]
\multicolumn{3}{l}{
\xrightarrow{\piin{d}{2?}} 
\qquad
\begin{tikzpicture}[node distance=30pt]
    \node[circle,draw] (a) {\phantom{$\circ$}};
       \draw node[above left=4pt] at (a) {$o$};
    \node[circle,draw,right of=a,node distance=35pt] (b) {\phantom{$\circ$}};
       \draw node[above left=4pt] at (b) {$i$};
    \node[circle,draw,right of=b,node distance=35pt] (c) {$\circ$};
       \draw node[above left=4pt] at (c) {$=$};
    \node[circle,draw,dashed,right of=c,node distance=35pt] (d) {$P$};
    \draw[dashed,->] (a) to (b);
    \draw[dashed,->] (b) to (c);
    \draw[dashed,->] (c) to (d);

    \node[rectangle,draw,below of=a] (bc) {$c$};
    \node[rectangle,draw, below of=b] (na) {$1!$};
       \draw node[right=5pt] at (na) {$\scriptsize \nu a$};
    \path[->] (a) edge (bc);
    \draw[dotted] (a) to (na);
    \node[rectangle,draw,above of=b] (bd) {$d$};
    \node[rectangle,draw, above of=c] (ix) {$2?$};
       \draw node[right=6pt] at (ix) {$\scriptsize \iota x$};
    \path[->] (bd) edge (b);
    \draw[dotted] (b) to (ix);
    \draw[dotted] (c) to (ix);
    \draw[dotted] (c) to (na);
\end{tikzpicture}
}
&
\textrm{[i-fresh]}\\[.2cm]
\multicolumn{4}{l}{
\qquad \qquad
2  \vdash \piout{c}{\nu a\sep \overline{0}}\piin{d}{\iota x\sep 2?}\redex{[(\nu a\sep 1!)=(\iota x\sep 2?)]}P 
}
\\[.2cm]
\multicolumn{3}{l}{
\xrightarrow{\varepsilon} 
\qquad 
2;\{1!,2?\} \vdash \piout{c}{\nu a\sep 1!}\piin{d}{\iota x\sep 2?}[(\nu a\sep 1!)=(\iota x\sep 2?)]\redex{P} 
}
&
\textrm{[match]}
\\
\end{array}\]
\end{center}
\caption{\label{fig:match:ex} Example of match (abridged)}
\end{figure}

In the initial state, the logical clock value is 0. The first action is
an emission of the private name $\nu a$ over the public channel $c$.
This leads to the observation $\piout{c}{1!}$ and the clock value
becomes $1$. The second step is a reception from the public channel
$d$, the received name is selected fresh and it is denoted $2?$. The
clock value is once again incremented. Now a match is performed,
testing if the fresh names $1!$ and $2?$ can be made equal. The answer
is positive because the name $1!$ has been sent \emph{before} $2?$ is
received. The justification of this causal link is simply the
comparison of their respective clock value, i.e., $1<2$. To perform the
match we record in the context of the $\pi$-graph, together with the
clock, a (dynamic) partition of names wrt. equality. By default, all
the names are considered distinct and thus the partition only contains
singletons, which are left implicit for the sake of readability. In
the final state, the partition is refined so that the singletons
$\{1!\},\{2?\}$ are replaced by their union $\{1!,2?\}$. In this
context (and thus in the continuation $P$) the two names are
considered equal. Now, if we perform first the reception and then the
emission, a causal link should not exist and we thus expect the two
names \emph{cannot} be made equivalent.
\[
0   \vdash \redex{\piin{d}{\iota x}}\piout{c}{\nu a}[\nu a=\iota x]P \quad
\xrightarrow{\piin{d}{1?}} \quad \cdots \quad
\xrightarrow{\piout{c}{2!}} \quad
2 \vdash \piin{d}{\iota x\sep 1?}\piout{c}{\nu a\sep 2!}\redex{[(\nu a\sep 2!)=(\iota x\sep 1?)]}P 
\]
The match fails because the names $1?$ and $2!$ cannot be equated, which is because $2<1$ does \emph{not} hold. 

There are two distinct abstraction levels where the properties of the $\pi$-graphs can be discussed:  the process algebra level and the lower-level of the underlying graph model. We now give the basic definitions of the graph model.

\begin{definition} \label{def:names}
The set of names is $\mathcal{N}\defs \mathcal{N}_f \uplus \mathcal{N}_b\uplus \mathcal{N}_r \uplus \mathcal{N}_p \uplus \mathcal{N}_o \uplus \mathcal{N}_i$ with:

$\left [\begin{array}{l}
\mathcal{N}_f  \text{ the set of free names } a,b,\ldots \\
\mathcal{N}_b  \text{ the set of binder names } \iota x,\iota y,\ldots \\
\mathcal{N}_r  \text{ the set of restrictions } {\nu}A,{\nu}B,\ldots \\
\mathcal{N}_p \text{ the set of private names } {\nu}a,{\nu}b,\ldots \\
\mathcal{N}_o\defs \{ n! \mid n\in\mathbb{N}\} \text{ the set of fresh outputs}\\
\mathcal{N}_i\defs \{ n? \mid n\in\mathbb{N}\} \text{ the set of fresh inputs}
\end{array}\right .$

We also define $\fun{Priv}\defs \mathcal{N}_r \cup \mathcal{N}_p$ (private names), $\fun{Pub}\defs\mathcal{N}\setminus \fun{Priv}$ (public names) and $\fun{Stat}\defs  \mathcal{N}\setminus ( \mathcal{N}_o\cup \mathcal{N}_i)$ (static names) 
\end{definition}

\begin{definition} \label{def:pigraph}
A configuration is a tuple $\pi\defs \langle \kappa, \gamma, P,\fun{pt},B,\fun{bn},\fun{data},\fun{in},\fun{out},\fun{ctl},M,I \rangle$ with 
\begin{itemize}
\item $\kappa \in \mathcal{K}$ a \emph{clock} value (see below),
\item $\gamma \subseteq \powerset{ \mathcal{N}_f\cup\mathcal{N}_i\cup\mathcal{N}_o}$ a partition of names, 
\item $P$ a finite abstract set of places,
\item $\fun{pt} : P \rightarrow \{0,\tau,i,o,=,\sum,\prod,*\}$ the place types,
\item $B$ a finite abstract set of boxes,
\item $\fun{bn} : B \rightarrow \fun{Stat}$ an injective function for box names,
\item $\fun{data},\fun{in},\fun{out} : P \rightarrow B \cup \{\bot\}$ the data, input and output links,
\item $\fun{ctl} : P \rightarrow \powerset{P}$ the control links, 
\item $M : P\rightarrow \{\circ,\emptyset\}$ a marking function ($\circ$ redex, $\emptyset$ empty mark),
\item $I : B\rightarrow \mathcal{N}$  a box instantiation function.
\end{itemize}
\end{definition}

In Definition~\ref{def:pigraph}, $\kappa$, $\gamma$, $M$ and $I$ are the only dynamic elements;
they will evolve through the application of the semantic rules, cf. Table~\ref{tab:semantics}.
Initially, the partition $\gamma$ contains only the singleton subsets of the infinite set 
$\mathcal{N}_f \cup\mathcal{N}_i\cup \mathcal{N}_o$ of names, $I$ is $\fun{bn}$ and the marking $M$ corresponds to the $\circ$-marking of the initial place of each iterator.
An initial $\pi$-graph is a configuration that is well-formed according to the syntax rules of Table~\ref{tab:syntax}.
A $\pi$-graph is a configuration that is both well-formed and reachable from an initial one by application of the semantic rules.
Only well-formed $\pi$-graphs will be considered in the following.
In order to keep the notations compact, we shall classically omit the singleton sets of a partition $\gamma$ (hence, initially, $\gamma=\emptyset$).

\begin{table}[p]
\begin{center}
\begin{tabular}{lcr}
\bf{Prefixes} $p$ ::= \\
\begin{tikzpicture}
\draw (-1,0) node[circle,fill=white,draw] (p0) {\phantom{$\bullet$}};
\draw (-1,.5) node (p0p) {$\scriptstyle\tau$};
\draw (0,0) node[circle,fill=white,draw,dashed] (p1) {\phantom{$\bullet$}};
\draw[dashed,->] (p0) to (p1);
\end{tikzpicture}
&
$\mid$
&
\begin{tikzpicture}
\draw (-2,.3) node[rectangle,draw] (a) {$\delta$};
\draw (-2.4,.3) node (a1) {$\scriptstyle\Delta$};
\draw (-2,-.3) node[rectangle,draw] (c) {$\varphi$};
\draw (-2.4,-.3) node (c1) {$\scriptstyle\Phi$};
\draw (-1,0) node[circle,fill=white,draw] (p0) {\phantom{$\bullet$}};
\draw (-1,.5) node (p0p) {$\scriptstyle o$};
\path[->] (p0) edge (c);
\path[dotted] (a) edge (p0);
\draw (0,0) node[circle,fill=white,draw,dashed] (p1) {\phantom{$\bullet$}};
\draw[dashed,->] (p0) to (p1);
\end{tikzpicture} \\[.2cm]
Silent  ${\tau}$  & & Output $\piout{\Phi\sep\varphi}{\Delta\sep\delta}$ \\[.3cm]
\begin{tikzpicture}
\draw (-2,.3) node[rectangle,draw] (a) {$\iota x$};
\draw (-2,-.3) node[rectangle,draw] (c) {$\varphi$};
\draw (-2.4,-.3) node (c1) {$\scriptstyle\Phi$};
\draw (-1,0) node[circle,fill=white,draw] (p0) {\phantom{$\bullet$}};
\path[->] (c) edge (p0);
\path[dotted] (a) edge (p0);
\draw (-1,.5) node (p0p) {$\scriptstyle i$};
\draw (0,0) node[circle,fill=white,draw,dashed] (p1) {\phantom{$\bullet$}};
\draw[dashed,->] (p0) to (p1);
\end{tikzpicture} 
&
$\mid$
&
\begin{tikzpicture}
\draw (-2,.3) node[rectangle,draw] (a) {$\delta$};
\draw (-2.4,.3) node (a1) {$\scriptstyle\Delta$};
\draw (-2,-.3) node[rectangle,draw] (c) {$\varphi$};
\draw (-2.4,-.3) node (c1) {$\scriptstyle\Phi$};
\draw (-1,0) node[circle,fill=white,draw] (p0) {\phantom{$\large\bullet$}};
\draw (-1,.5) node (p0p) {$\scriptstyle =$};
\path[dotted] (p0) edge (c);
\path[dotted] (a) edge (p0);
\draw (0,0) node[circle,fill=white,draw,dashed] (p1) {\phantom{$\bullet$}};
\draw[dashed,->] (p0) to (p1);
\end{tikzpicture} \\[.2cm]
Input  $\piin{\Phi\sep\varphi}{\iota x}$ && Match  $\pim{\Phi\sep\varphi}{\Delta\sep\delta}$ \\[.3cm]
\begin{tikzpicture}
\draw (-1,1) node[circle,fill=white,draw] (p0) {\phantom{$\bullet$}};
\draw (-1.4,1.4) node (p0p) {$\sum$};
\draw (.75,1.1) node (dots) {$\vdots$};
\draw[densely dashed,thick,rounded corners] (0,1.3) rectangle (1.5,2.1);
\draw (0,1.7) node[circle,fill=white,draw] (p1a) {\phantom{$\bullet$}};
\draw (.75,1.7) node (G) {$P_{1}$};
\draw (1.5,1.7) node[circle,fill=white,draw,double] (p2a) {\phantom{$\bullet$}};
\draw[densely dashed,thick,rounded corners] (0,-.1) rectangle (1.5,.7);
\draw (0,0.3) node[circle,fill=white,draw] (p1c) {\phantom{$\bullet$}};
\draw (.75,0.3) node (G) {$P_{n}$};
\draw (1.5,0.3) node[circle,fill=white,draw,double] (p2c) {\phantom{$\bullet$}};
\draw (2.5,1) node[circle,fill=white,draw,dashed] (he) {\phantom{$\bullet$}};
\draw[dashed,->] (p0) to (p1a);
\draw[dashed,->] (p0) to (p1c);
\draw[dashed,->] (p2a) to (he);
\draw[dashed,->] (p2c) to (he);
\end{tikzpicture} 
&
$\mid$
&
\begin{tikzpicture}
\draw (-1,1) node[circle,fill=white,draw] (p0) {\phantom{$\bullet$}};
\draw (-1.4,1.4) node (p0p) {$\prod$};
\draw (.75,1.1) node (dots) {$\vdots$};
\draw[densely dashed,thick,rounded corners] (0,1.3) rectangle (1.5,2.1);
\draw (0,1.7) node[circle,fill=white,draw] (p1a) {\phantom{$\bullet$}};
\draw (.75,1.7) node (G) {$P_{1}$};
\draw (1.5,1.7) node[circle,fill=white,draw,double] (p2a) {\phantom{$\bullet$}};
\draw[densely dashed,thick,rounded corners] (0,-.1) rectangle (1.5,.7);
\draw (0,0.3) node[circle,fill=white,draw] (p1c) {\phantom{$\bullet$}};
\draw (.75,0.3) node (G) {$P_{n}$};
\draw (1.5,0.3) node[circle,fill=white,draw,double] (p2c) {\phantom{$\bullet$}};
\draw (2.5,1) node[circle,fill=white,draw,dashed] (he) {\phantom{$\bullet$}};
\draw[dashed,->] (p0) to (p1a);
\draw[dashed,->] (p0) to (p1c);
\draw[dashed,->] (p2a) to (he);
\draw[dashed,->] (p2c) to (he);
\end{tikzpicture} \\[.2cm]
Choice  $\sum[P_{1} + \ldots + P_{n}] \quad (n>1) $ && Parallel  $\prod[P_{1} \parallel \ldots \parallel P_{n}] \quad (n>1)$ \\[.4cm]
{\bf Processes} $P$ ::=  \\[.2cm]
 \begin{tikzpicture}
\draw (-1,0) node[circle,fill=white,draw] (p0) {\phantom{$\bullet$}};
\draw[densely dotted,rounded corners] (-1,-.4) rectangle (.5,.4);
\draw (-1,0) node[circle,fill=white,draw] (p1a) {\phantom{$\bullet$}};
\draw (-.25,0) node (G1) {$p$};
\draw (.5,0) node[circle,fill=white,draw,double] (p1c) {\phantom{$\bullet$}};
%
\end{tikzpicture} 
& 
$\mid$
&
\begin{tikzpicture}
\draw (-1,0) node[circle,fill=white,draw] (p0) {\phantom{$\bullet$}};
\draw[densely dotted,rounded corners] (-1,-.4) rectangle (.5,.4);
\draw (-1,0) node[circle,fill=white,draw] (p1a) {\phantom{$\bullet$}};
\draw (-.25,0) node (G1) {$p$};
\draw[densely dashed,thick,rounded corners] (.5,-.4) rectangle (2,.4);
\draw (.5,0) node[circle,fill=white,draw] (p1c) {\phantom{$\bullet$}};
\draw (1.25,0) node (G2) {$P$};
\draw (2,0) node[circle,fill=white,draw,double] (p2c) {\phantom{$\bullet$}};
\end{tikzpicture} \\[.2cm]
Termination  $p0$ $\quad$ ($p\neq$  match)  & & Prefixed process $pP$ \\[.3cm]
\multicolumn{3}{l}{\bf{Iterator} $I$ ::=  }\\
 \multicolumn{3}{c}{
\begin{tikzpicture}
\draw (-2,1.3) node[rectangle,draw] (a1) {${\nu}a_1$};
\draw (-2,1) node (dots) {$\vdots$};
\draw (-2,0.5) node[rectangle,draw] (an) {${\nu}a_n$};
\draw (-1,1.8) node[rectangle,draw] (x1) {${\iota}x_1$};
\draw (-0.3,1.8) node (dots) {$\ldots$};
\draw (.4,1.8) node[rectangle,draw] (xm) {${\iota}x_m$};
\draw[densely dashed,thick,rounded corners] (0,0.6) rectangle (1.5,1.4);
\draw (-1.3,1.3) node (it) {*};
\draw (-1,1) node[circle,fill=white,draw] (p0) {\phantom{$\bullet$}}; 
\draw (0,1) node[circle,fill=white,draw] (p1) {\phantom{$\bullet$}};
\draw (1.5,1) node[circle,fill=white,draw,double] (p2) {\phantom{$\bullet$}};
\draw (.75,1) node (G) {$P$};
\draw (2.0,1) node (i1) {$\hspace*{-5pt}\neg$};
\draw (2.0,0.2) node (i2) {.};
\draw (-1,0.2) node (i3) {.};
\draw[dashed] (p2) to (i1);
\draw[dashed] (i1) to (i2);
\draw[dashed] (i2) to (i3);
\draw[dashed,->] (i3) to (p0);
\draw[dashed,->] (p0) to (p1);
\draw[dotted] (a1) to (p0);
\draw[dotted] (an) to (p0);
\draw[dotted] (x1) to (p0);
\draw[dotted] (xm) to (p0);
\end{tikzpicture} 
}\\
\multicolumn{3}{c}{
$*[({\nu}a_1,\ldots,{\nu}a_n)({\iota}x_1,\ldots,{\iota}x_m)~P]$ 
} \\[.4cm]
\multicolumn{3}{l}{\bf{Graph} $\pi$ ::=  $(a_1\ldots a_i) (\nu A_1,\ldots,\nu A_j)~[I_1 \parallel \ldots \parallel I_k] \quad$  ($k\geq 1$)} \\

\end{tabular}
\end{center}
\caption{\label{tab:syntax} Syntax}
\end{table}

A graph declares a set of free names (in $\mathcal{N}_f$), denoted $(a_1,\ldots,a_i)$, a set of global restrictions  (in $\mathcal{N}_r$), 
denoted $(\nu A_1,\ldots, \nu A_j)$
and a parallel composition of $k$ iterators, 
$k\geq 1$. 
An iterator declares a set of (locally) private names  (in $\mathcal{N}_P$), denoted $(\nu a_1,\ldots,\nu a_n)$, a set of
binder names  (in $\mathcal{N}_B$), denoted $(\iota x_1,\ldots,\iota x_m)$, and an iterated process $P$.
The place labeled $*$ is the initial place of the iterator. 
A process $P$ is a non-empty sequence of prefixes $p$ terminated by $0$; the latter corresponds to a unique place, of type $0$, represented with a double border.
Each prefix has (see Table~\ref{tab:syntax}) a unique terminating 
place, represented with a dashed border, which will be used to 
glue the prefixes together, and a unique initial place.  
A silent prefix has no box and an initial place labeled $\tau$.
An output prefix $\piout{\Phi\sep\varphi}{\Delta\sep\delta}$, 
whose initial place is labeled $o$, allows to emit 
a formal name $\Delta$, instantiated by $\delta$, 
on a channel with a formal name $\Phi$, instantiated by $\phi$.
This is indicated by a data (dotted) and an output (plain) link, respectively.
Each formal name is represented by a box with the instantiated name inside. 
We systematically omit box identities if they are the same as their instantiation.
Initially, it is in the form $\piout{\Phi\sep\Phi}{\Delta\sep\Delta}$, usually condensed in 
$\piout{\Phi}{\Delta}$, and in the graphical representation, the identity of the nodes is omitted 
if it is considered irrelevant, or may be inferred by the context.
An input prefix $\piin{\Phi\sep\varphi}{\iota x}$, 
whose initial place is labeled $i$, allows to receive 
an instantiation for the formal name $\iota x$ 
on a channel with a formal name $\Phi$, instantiated by $\phi$.
This is indicated by a data and input link, respectively.
A match prefix $\pim{\Phi\sep\varphi}{\Delta\sep\delta}$,
whose initial place is labeled ${=}$, allows to identify 
a formal name $\Delta$, instantiated by $\delta$, 
with a formal name $\Phi$, instantiated by $\varphi$.
This is indicated by two data links.
A choice prefix $\sum[P_{1} + \ldots + P_{n}]$ allows to choose one out of several processes $P_1$ to $P_n$;
it starts with a place labeled $\sum$ connected to the starting place of each of those processes, and each terminating place of a process is connected to the terminating place of the choice prefix.
A parallel prefix $\prod[P_{1} + \ldots + P_{n}]$ allows to activate simultaneously all the processes $P_1$ to $P_n$;
it starts with a place labeled $\prod$ connected to the starting place of each of those processes, and each terminating place of a process is connected to the terminating place of the parallel prefix.

A \emph{clock model} is a type $\mathcal{K}$ associated to a set of operations with the following signatures:
$\fun{init}:\mathcal{K};
\fun{in}, \fun{out}: \mathcal{K} \rightarrow \mathcal{K};
\fun{next_i},\fun{next_o}: \mathcal{K} \rightarrow \mathbb{N};
\prec: \mathcal{K} \times\mathcal{N}_o \times \mathcal{N}_i \rightarrow \mathbb{B}$. 
In the semantics (cf. the next Section), it is assumed that every transition path 
starts with the initial clock value $\fun{init}$. The function $\fun{in}$ (resp. $\fun{out}$) is used
to update the clock when an input (resp. an output) is performed. The identity of the fresh names is
 generated with $\fun{next_i}$ (fresh input) and $\fun{next_o}$ (fresh output). The read-write causality ordering
  is expressed by the $\prec$ relation. A triplet $(\kappa,n!,m?)\in \prec$ is denoted $n!\prec_\kappa m?$.

In the following we will be interested in a freshness property of a clock model.
 
\begin{definition}
\label{def:freshness}
Let $\pi$ be a graph with clock $\kappa$ and instantiation $I$, then $\pi$ satisfies the freshness constraint if:
$\fun{next_o}(\kappa)!\not\in \fun{cod}(I)\land \fun{next_i}(\kappa)?\not\in \fun{cod}(I)$.
\end{definition}

Notice that any initial graph satisfies the freshness constraint since 
$\fun{cod}(I)\cap (\mathcal{N}_i \cup \mathcal{N}_o)=\emptyset$. 

A clock model satisfies the \emph{freshness constraint} if for any $\pi$ reachable 
from an initial one using the evolution rules described in the next section, 
the freshness constraint is preserved.

The simplest model of \emph{logical clocks} is such that $\mathcal{K} = \mathbb{N}$ with: \\
$\fun{init} = 0,\fun{out}(\kappa) = \fun{next_o}(\kappa),\fun{in}(\kappa) = \fun{next_i}(\kappa),\fun{next_o}(\kappa) =\fun{next_i}(\kappa) = \kappa+1$ and $n! \prec_\kappa m? \textrm{ iff } n < m$. Such logical clocks trivially satisfy the freshness constraint.

\section{Operational semantics}
\label{sec:semantics}
 
\begin{table}[p]
\begin{center}

\begin{tabular}{ll}
[silent] & $\kappa;\gamma\vdash \redex{\tau}P \xrightarrow{\tau} \kappa;\gamma\vdash{\tau}\redex{P}$
\\[0.5cm]
[out] & $\kappa;\gamma\vdash \redex{\piout{\Phi\sep\varphi}{\Delta\sep\delta}}P \xrightarrow{\piout{\varphi}{\delta}} \kappa;\gamma\vdash \piout{\Phi\sep\varphi}{\Delta\sep\delta}\redex{P}$ \quad if $\Phi,\Delta\in\fun{Pub}$ 
\\[0.5cm]
[o-fresh] & $\kappa;\gamma\vdash \redex{\piout{\Phi\sep\varphi}{{\nu}\alpha}}P \xrightarrow{\piout{\varphi}{\fun{next_o}(\kappa)!}} \fun{out}(\kappa);\gamma\vdash 
\piout{\Phi\sep\varphi}{{\nu}\alpha\sep{\fun{next_o}(\kappa)}!}\redex{P}$ \\ & if $\Phi\in\fun{Pub},~\nu\alpha\in\fun{Priv}$
\\[0.5cm]
%
%
[i-fresh] & $\kappa;\gamma\vdash \redex{\piin{\Phi\sep\varphi}{\iota x}}P \xrightarrow{\piin{\varphi}{\fun{next_i}(\kappa)?}} \fun{in}(\kappa);\gamma\vdash \piin{\Phi\sep\varphi}{\iota x\sep \fun{next_i}(\kappa)?}\redex{P}$ \quad if $\Phi\in\fun{Pub}$ 
\\[0.5cm]
%
%
[match] & $\kappa;\gamma\vdash \redex{\pim{\Phi\sep\varphi}{\Phi'\sep\varphi'}}P \xrightarrow{\varepsilon} \kappa;\gamma_{\lhd \varphi=\varphi'} \vdash \pim{\Phi\sep\varphi}{\Phi'\sep\varphi'}\redex{P}$  $\quad\textrm{ if } \varphi\compat{\gamma}_\kappa\varphi'$ 
\\[0.5cm] 
[sync] & $\kappa;\gamma\vdash \redex{\piout{\Phi\sep\varphi}{\Delta\sep\delta}}P \parallel \redex{\piin{\Phi'\sep\varphi'}{\iota x\sep\iota x}}Q$ \\ & $\xrightarrow{\tau} \kappa;\gamma_{\lhd \varphi=\varphi'} \vdash \piout{\Phi\sep\varphi}{\Delta\sep\delta}\redex{P} \parallel \piin{\Phi'\sep\varphi'}{\iota x\sep\delta}\redex{Q}$ $\quad \textrm{ if } \varphi\compat{\gamma}_\kappa\varphi'$ 
\\[0.5cm] 
[sum] & $\kappa;\gamma\vdash \redex{\sum}[P_{1} + \ldots  +  P_{i} +  \ldots +  P_{n}]Q$ 
$\xrightarrow{\mu} \kappa';\gamma'\vdash \sum[P_{1} + \ldots +  \mathcal{P}_i + \ldots +  P_{n}]Q$ \\
&  if ${\exists}\mu\neq\varepsilon,~\kappa;\gamma\vdash \redex{P_{i}} \xrightarrow{\varepsilon^*\mu} \kappa';\gamma' \vdash \mathcal{P}_i$
\\[0.5cm] 
[sum$_0$] & $\kappa;\gamma\vdash \sum[P_{1} + \ldots  +  P _{i}{\redex{0}} +  \ldots +  P_{n}]Q$  
$\xrightarrow{\varepsilon} \kappa;\gamma\vdash \sum[P_{1} + \ldots +  P _{i}{0} + \ldots +  P_{n}]\redex{Q}$ 
\\[0.5cm] 
[par] & $\kappa;\gamma\vdash \redex{\prod}[P_{1}\parallel\ldots \parallel P_{i}\parallel \ldots\parallel P_{n}]Q$
$\xrightarrow{\varepsilon} \kappa;\gamma\vdash \prod[\redex{P_{1}}\parallel\ldots\parallel\redex{P_{i}}\parallel\ldots\parallel \redex{P_{n}}]Q$ 
\\[0.5cm]
[par$_0$] & $\kappa;\gamma\vdash \prod[P_{1}{\redex{0}}\parallel\ldots \parallel P_{i}{\redex{0}}\parallel \ldots\parallel P_{n}{\redex{0}}]Q$ 
$\xrightarrow{\varepsilon} \kappa;\gamma\vdash \prod[P _{1}{0}\parallel\ldots\parallel P_{i}{0}\parallel\ldots\parallel P_{n}{0}]\redex{Q}$ 
\\[0.5cm] 
[iter] &  $\kappa;\gamma\vdash \redex{*}[({\nu}a_1\sep\delta_1),\ldots,(\nu a_n\sep\delta_n)|
					({\iota}x_1\sep\varphi_1),\ldots,(\iota x_m\sep\varphi_m)~P]$ \\
&  $\xrightarrow{\varepsilon} \kappa;\gamma\vdash *[({\nu}a_1\sep \delta_1),\ldots,(\nu a_n\sep \delta_n)|
					({\iota}x_1\sep\varphi_1),\ldots,(\iota x_m\sep\varphi_m)~\redex{P}]$
\\[0.5cm]
[iter$_0$] &  $\kappa;\gamma\vdash *[({\nu}a_1\sep\delta_1),\ldots,(\nu a_n\sep\delta_n)|
					({\iota}x_1\sep\varphi_1),\ldots,(\iota x_m\sep\varphi_m)~P{\redex{0}}]$ \\
&  $\xrightarrow{\varepsilon} \kappa;\gamma\vdash \redex{*}[({\nu}a_1),\ldots,(\nu a_n)|
					({\iota}x_1),\ldots,(\iota x_m)~P0]$
\end{tabular}
\end{center}
\caption{\label{tab:semantics} The operational semantics rules.}
\end{table}


\begin{sloppypar}
  The operational semantics for the $\pi$-graphs provide the meaning
  of the one-step transition relation $\xrightarrow{.}$
  (the dot symbol denotes an arbitrary label).  The rules of
  Table~\ref{tab:semantics} describe the local updates of a global graph
  $\pi \defs \langle \kappa,
  P,\fun{pt},B,\fun{bn},\fun{data},\fun{in},\fun{out},\fun{ctl},M,I
  \rangle$.  Most rules are of the form
  \[
  \kappa;\gamma \vdash pP \quad \xrightarrow{\alpha} \quad \kappa' ;\gamma'\vdash
  p'P'
  \] 
  where $\kappa;\gamma$ is the \emph{global context} of the rule.  The
  left-hand side (LHS) is a pattern describing a \emph{local context} composed of a prefix $p$
  and its continuation $P$.  The right-hand side (RHS) is an updated
  version of the local context. The rule is applicable if a subgraph
  of $\pi$ matches the LHS. In this case a (global) transition
  labelled $\alpha$ occurs and the matched subgraph in $\pi$ is
  updated according to the RHS. The global context of $\pi$ may also be
  updated. For example, the LHS of the [silent] rule identifies a
  sub-graph of $\pi$ consisting of a place $p\in P$ such that
  $\fun{pt}(p)=\tau$ and $M(p)=\circ$, followed by its
  continuation\footnote{According to the syntax
    (cf. Table~\ref{tab:syntax}), the continuation of a prefix
    is either a place $0$ or the initial place of the next
    prefix in the sequence.}. The RHS of the rule describes the
  next state $\pi'$ with a global context unchanged. The local context
  is updated so that the token in $p$ is passed to the initial place
  $q$ of the continuation, i.e., in the image $\pi'$, we have $M'(p)=\emptyset$ and
  $M'(q)=\circ$. We put a frame around a whole process to denote the presence of 
  a token $\circ$ in its initial
  place. The
  inferred transition carries the label $\tau$, which corresponds to a
  silent transition.
\end{sloppypar}

The [par] rule is similar to the silent step except that the token is
replicated for all the continuation places, simulating the fork of
parallel processes. The latter works in conjunction with the [par$_0$]
rule, which waits for all the forked processes to terminate before
passing the token to the continuation place. We use a $0$ suffix to make
explicit the termination place of the process when required.  The iterators are
operated in a similar way using the [iter] and [iter$_0$] rules. As
illustrated in the example of Figure~\ref{fig:ex:iterator}, each box
  $b$ for private or binder names ($\fun{bn}(b)\in\mathcal{N}_p\cup\mathcal{N}_b$) is
  reinitialized ($I(b)=\fun{bn}(b)$) at the end of each iteration.

The choice operator requires as in the $\pi$-calculus to play ``one move in
 advance'': the [sum] rule applies if we can follow a branch of
the choice such that at some point an observation can be made, possibly after an
arbitrary - but \emph{finite} - sequence of $\varepsilon$-transitions (cf. Lemma \ref{lemma:epsilon:term}).

The communication rules are critical components of the semantics. The
[out] rule applies when a process emits a public value using a public
channel (i.e., in set $\fun{Pub}$).  The effect of the rule is to
produce a transition with the observation as a label. 
The LHS of the [o-fresh] rule matches the
emission of a private name over a public channel. As explained in the
example of Figure~\ref{fig:ex:iterator}, the principle is to generate a
name that is guaranteed fresh by construction. This is obtained by
taking the next value of the current clock, which gives
$\fun{next_o}(\kappa)!$. To preserve the freshness constraint (cf.
Definition~\ref{def:freshness}), the clock itself is updated. For
example, if $\kappa$ is a logical clock assigned to the
value $3$ then the generated fresh name is denoted $4!$ (fresh by construction) and the clock
evolves to the value $4$. 

The rule for input is quite similar to the output ones. When a name is received from the environment, the [i-fresh] rule 
generates a fresh identity $\fun{next_i}(\kappa)?$ for it and records the observation. 

The rule [sync] is for a communication taking place internally in a
$\pi$-graph. The LHS of the rule matches two subgraphs in distinct
parallel processes , one is an output prefix with a $\circ$-token and the other
one a corresponding input also with a $\circ$-token (and both with their
respective continuations). The rule can be triggered either if the two
processes belong to different parallel branches of execution within the same
iterator, or if they are components of two distinct iterators. In both
cases, the effect of the rule is the same: the tokens are passed to
the respective continuations and the box of the input prefix is
instantiated with the emitted value. Similarly to late congruence
for the $\pi$-calculus, the communication can be triggered if the
partners potentially agree on the name of the channels. The communication
rule thus ``incorporates'' the semantics of the match prefix.

The matching of names is a central aspect of the proposed
semantics. It is indeed required in both the [match] and [sync]
rules. As illustrated by the symbolic semantics
of~\cite{DBLP:journals/iandc/BorealeN96}, matching in the
$\pi$-calculus is non-trivial because equality on names is dynamic,
i.e. two distinct names $a,b$ can be made equal through a match, under
certain conditions. In this work, the conditions we use relate to a
form of read-write causality~\cite{DBLP:conf/icalp/DeganoP95}. Instead of just comparing names, the equality relation on names can be dynamically refined by
updating the partition $\gamma$ (cf. the last example of
Section~\ref{sec:language}). The condition for the matching of two
names $\delta,\delta'$ under some clock $\kappa$ is denoted
$\delta\equi_\kappa \delta'$.

\begin{definition}\label{def:compat}
$\equi_\kappa$ is the smallest reflexive and symmetric binary relation on $\mathcal{N}$ such that
$\delta \equi_\kappa \delta'  \textrm{ if } \\
(\delta,\delta'\in \mathcal{N}_f \cup \mathcal{N}_i)
\lor (\delta=n! \in  \mathcal{N}_o \land\delta'=m?\in  \mathcal{N}_i \land n\prec_{\kappa} m)
$ 
\end{definition}

If $\delta$ is a free (public) name ($\delta\in\mathcal{N}_f$), there
are two possibilities for $\delta'$ to match it: either it is also a
free name or it belongs to the set of fresh input names. Indeed, we
may always receive a public name from the environment. If both 
names correspond to (fresh) inputs, they may also be equated. The
most delicate case is when $\delta$ is a fresh output and $\delta'$ a
fresh input.  As illustrated in Section \ref{sec:language}), the names can only be equated if the input is
causally dependent on the output.

The partition of names $\gamma$ can be \emph{refined} 
by a new equality $\delta=\delta'$ using the notation $\gamma_{\lhd \delta=\delta'}$,
if $\delta$ and $\delta'$ are \emph{compatible}, which is denoted $\delta \compat{\gamma}_\kappa \delta'$.

\begin{definition}\label{def:partition-update}
Let $\gamma$ be a partition of names, $\kappa$ a clock and $\delta$ and $\delta'$ names.
\begin{enumerate}
	\item $\delta \compat{\gamma}_\kappa \delta'$ iff 
$\forall n\in [\delta]_{\gamma}, \forall m\in [\delta']_{\gamma}\colon n\equi_\kappa m$,
\item $\gamma_{\lhd \delta=\delta'}\defs (\gamma\setminus \{[\delta]_{\gamma},[\delta']_{\gamma}\}) \cup
		\{ [\delta]_{\gamma} \cup [\delta']_{\gamma}\}$ if $\delta \compat{\gamma}_\kappa \delta'$.
\end{enumerate}
\end{definition}

This updates the partition so that a new equality holds, but only if the two names can \emph{actually} be made equal. The notation $[\delta]_\gamma$ denotes the equivalence class of $\delta$ in the relation $\gamma$. The following proposition plays a role in the finiteness results of Section~\ref{sec:results}.

\begin{proposition} \label{prop:no-two-output-equivalence}
Let $\pi$ a graph with partition $\gamma$. For any $E\in \gamma$, $n!\in E \implies E\setminus \{ n! \} \subset \mathcal{N}_i$
\end{proposition}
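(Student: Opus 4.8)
The plan is to prove this by induction on the length of the transition sequence leading from an initial graph to $\pi$, showing that the stated property is an invariant preserved by each semantic rule. The property to maintain is: \emph{every equivalence class $E \in \gamma$ contains at most one fresh output name, and if it contains a fresh output $n!$ then all its other elements are fresh inputs} (i.e.\ $E \setminus \{n!\} \subset \mathcal{N}_i$). Note that since $\gamma \subseteq \powerset{\mathcal{N}_f \cup \mathcal{N}_i \cup \mathcal{N}_o}$, the only names that can appear in a class are free names, fresh inputs, and fresh outputs; the content of the claim is precisely that a class may hold a fresh output \emph{only} together with fresh inputs, never with a free name and never with a second fresh output.

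\textbf{Base case.} Initially $\gamma$ consists only of singletons (indeed $\gamma = \emptyset$ under the convention that singletons are omitted). A singleton $\{n!\}$ trivially satisfies $E \setminus \{n!\} = \emptyset \subset \mathcal{N}_i$, so the invariant holds vacuously.

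\textbf{Inductive step.} Inspecting Table~\ref{tab:semantics}, the partition $\gamma$ is modified only by the two rules that perform a refinement $\gamma_{\lhd \varphi = \varphi'}$, namely [match] and [sync]; every other rule leaves $\gamma$ untouched, so the invariant is trivially preserved there. For the refining rules, by Definition~\ref{def:partition-update} the new class is $[\delta]_\gamma \cup [\delta']_\gamma$, formed only when $\delta \compat{\gamma}_\kappa \delta'$, which by part~(1) of that definition requires $n \equi_\kappa m$ for \emph{every} $n \in [\delta]_\gamma$ and $m \in [\delta']_\gamma$. The key observation is then a case analysis using Definition~\ref{def:compat}: the relation $\equi_\kappa$ never relates two fresh outputs (the only clauses producing a relation are free/input with free/input, or a single output with a later input). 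Suppose, for contradiction, that the merged class contained two distinct fresh outputs $n_1!, n_2!$. By the induction hypothesis each already lay in a \emph{separate} old class (no old class holds two outputs), so one came from $[\delta]_\gamma$ and the other from $[\delta']_\gamma$; but then compatibility would force $n_1! \equi_\kappa n_2!$, which Definition~\ref{def:compat} forbids. Hence the merged class contains at most one output. Similarly, if the merged class contains an output $n!$ together with some $p \in \mathcal{N}_f$, compatibility forces $n! \equi_\kappa p$, again impossible by Definition~\ref{def:compat} (a fresh output relates only to fresh inputs). Therefore $E \setminus \{n!\} \subset \mathcal{N}_i$, closing the induction.

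\textbf{Main obstacle.} The routine part is verifying that no non-refining rule touches $\gamma$; the substantive point, and the one deserving care, is the compatibility-based case analysis. The subtlety is that $\compat{\gamma}_\kappa$ quantifies over \emph{all} pairs drawn from the two classes, so one must correctly exploit the induction hypothesis to locate the two hypothetical outputs in distinct classes before invoking the fact that $\equi_\kappa$ cannot relate them. The crux is thus the simple but essential structural fact that $\equi_\kappa$ relates a fresh output only to a (causally later) fresh input and never to another output or to a free name, which is exactly what Definition~\ref{def:compat} encodes.
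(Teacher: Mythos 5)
Your proof is correct and follows essentially the same route as the paper's (one-sentence) argument: an induction from the all-singletons initial partition, observing that $\gamma$ is only refined via $\gamma_{\lhd\varphi=\varphi'}$ in [match] and [sync], and that the compatibility condition of Definition~\ref{def:partition-update} together with Definition~\ref{def:compat} forbids a fresh output from being equated to another output or to a free name. Your write-up merely makes explicit the case analysis the paper leaves implicit; no gap.
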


\begin{proof}
This simply says that fresh output names can only be made equal with (fresh) input names, which is a direct consequence of Definition~\ref{def:compat}, Definition~\ref{def:partition-update}, the way it is used in the operational semantics, and the fact that initially all classes are singletons.
\end{proof}

A central property for the remaining developments is that there is a finite bound on the length of $\varepsilon$-sequences involved in the semantics.
To demonstrate this result, we first need to introduce the notion of \emph{full path} of a (terminated) process. 

\begin{definition}
Let $P0$ be a process. A full path of it is a sequence $\sigma$ of transitions leading from $\redex{P}0$ to $P\redex{0}$.
\end{definition}

\begin{lemma} \label{lemma:full-epsilon-path}
No full path may be an $\varepsilon$-sequence.
\end{lemma}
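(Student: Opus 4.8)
The plan is to prove the equivalent positive statement: \emph{every} full path of a process $P0$ contains at least one transition whose label differs from $\varepsilon$. The proof goes by structural induction on $P0$, equivalently by strong induction on the total number of prefixes occurring in it (counting those nested inside choices and parallels, so that the branches of a $\sum$ or $\prod$, and the continuation of a prefix, are strictly smaller). First I would classify the rules of Table~\ref{tab:semantics} by their labels: the rules [match], [sum$_0$], [par], [par$_0$], [iter] and [iter$_0$] are the only ones producing an $\varepsilon$, whereas [silent] and [sync] produce $\tau$, the rules [out], [o-fresh], [i-fresh] produce a visible action, and [sum] produces a label $\mu\neq\varepsilon$ by its very side condition. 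Note that [iter] and [iter$_0$] merely bracket a full path --- they move the token from the iterator place $*$ into $\redex{P}0$ and back from $P\redex{0}$ --- so they never occur strictly inside a full path and may be ignored.

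The key structural observation is that the grammar of Table~\ref{tab:syntax} forbids a match in terminal position: a process is a non-empty sequence $p_1\cdots p_k 0$ whose last prefix $p_k$ is \emph{not} a match. I would therefore concentrate on the transition(s) that carry the token into the final $0$ place, i.e.\ the traversal of $p_k$, and argue by cases on its type. If $p_k$ is silent, input or output, the corresponding rule ([silent], [i-fresh], [out] or [o-fresh]) fires with a $\tau$ or a visible label, which is already a non-$\varepsilon$ step, and we are done without invoking the induction hypothesis.

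The two remaining cases, $p_k=\sum[\ldots]$ and $p_k=\prod[\ldots]$, are where the real work lies. For a choice, the token sitting in the $\sum$-place can leave it only through [sum], since [sum$_0$] requires some branch to \emph{already} hold its token in its own terminating place, which is impossible unless that branch was first entered, and [sum] is the unique rule that enters a branch. Hence any full path crossing a $\sum$ necessarily fires [sum] at least once, and [sum] carries a label $\mu\neq\varepsilon$; this case again needs no appeal to the induction hypothesis. For a parallel $\prod[P_1\parallel\cdots\parallel P_n]$ with $n>1$, the rule [par$_0$] that releases the token to the continuation requires every branch $P_j$ to have reached its terminating place, so the full path contains, as a sub-sequence, a full path of each $P_j0$. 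Since each $P_j0$ is a strictly smaller process, the induction hypothesis applies and yields a non-$\varepsilon$ transition inside it (and should two branches synchronize via [sync], its label $\tau\neq\varepsilon$ only helps).

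I expect the choice and parallel cases to be the main obstacle, precisely because the bracketing rules [sum$_0$], [par] and [par$_0$] are themselves $\varepsilon$-labelled: the argument must locate the unavoidable non-$\varepsilon$ content either in the entry rule [sum] (for choices) or, for parallels, in the interior of the branches via the induction hypothesis, rather than in the structural glue that joins them. Once these cases are settled, the base case of a single non-match atomic prefix is immediate and the induction closes, establishing that no full path can be a pure $\varepsilon$-sequence.
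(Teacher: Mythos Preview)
Your proposal is correct and follows the same structural-induction strategy as the paper, resting on the syntactic constraint that a match may not immediately precede $0$ and handling the composite prefixes by recursion into their branches. Your treatment of the sum case is slightly more direct than the paper's---you observe that rule [sum] itself already carries a label $\mu\neq\varepsilon$, bypassing the appeal to the induction hypothesis that the paper makes there---but this is a minor sharpening of an otherwise identical argument.
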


\begin{proof}
The demonstration is by a simple structural induction on the syntax. 
First, the termination $0$ cannot be preceded by a match. 
Moreover, the property holds by induction for the parallel and sum sub-processes, which are the only prefixes able to generate an $\varepsilon$ at the end of a full path. 
\end{proof} 

\begin{lemma} \label{lemma:epsilon:term}
For any graph $\pi$, there is a finite bound on the length of the $\varepsilon$-sequences it may generate.
\end{lemma}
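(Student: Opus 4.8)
My goal is to bound the number of consecutive $\varepsilon$-transitions that any graph $\pi$ can perform. The first step is to identify which semantic rules of Table~\ref{tab:semantics} produce an $\varepsilon$ label: inspecting the rules, the $\varepsilon$-generators are [match], [par], [par$_0$], [iter], [iter$_0$], and [sum$_0$]. (The [sum] rule is explicitly guarded by $\mu\neq\varepsilon$, so it does not contribute a net $\varepsilon$-observation, and [silent], [out], [o-fresh], [i-fresh], [sync] all carry visible or $\tau$ labels.) The key observation is that none of these $\varepsilon$-steps touches the clock $\kappa$ or generates fresh names, and none of them moves a token backward; they either (i) propagate a token forward along a control edge into a strictly later place (the [match], [par], [iter] family pushing the marking toward continuations), or (ii) collapse a fork/iteration boundary. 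So the natural approach is to exhibit a \emph{well-founded measure} on configurations that strictly decreases along every $\varepsilon$-step, whence no infinite (and, since the state is finite, no unbounded) $\varepsilon$-sequence can occur.

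\textbf{Main steps.} I would proceed as follows. First, by Lemma~\ref{lemma:full-epsilon-path}, no full path of a process is an $\varepsilon$-sequence, so any maximal $\varepsilon$-run within a single iterated process must eventually hit a non-$\varepsilon$ prefix (an output, input, silent, or sync) and stop producing $\varepsilon$'s; this already bounds the $\varepsilon$-segments \emph{inside} one process by the number of places it contains. Second, I would lift this to the whole graph by defining the measure as a sum, over all currently marked places, of the length of the longest remaining $\varepsilon$-only control path from that place. Because $P$ is built by the finite syntax of Table~\ref{tab:syntax} (a finite, acyclic-up-to-iterator sequence of prefixes), this longest $\varepsilon$-path from any place is finite and well defined. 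Third, I would check that each $\varepsilon$-rule strictly decreases this measure: a [match], [par], or [iter] step advances a token across one $\varepsilon$-edge, reducing its contribution; the subtle cases are [iter$_0$] and [par$_0$], which move the token back to the iterator/parallel head — but here Lemma~\ref{lemma:full-epsilon-path} guarantees the body just traversed was \emph{not} purely $\varepsilon$, so a non-$\varepsilon$ action must have been consumed since the last loop-back, preventing an infinite regress. The finite bound is then the maximum value of the measure over the (finitely many) reachable markings of the finite place set $P$.

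\textbf{The main obstacle.} The genuinely delicate point is the interaction of iterators with the loop-back [iter$_0$] rule: iteration is exactly the mechanism introduced to model non-terminating behavior, so one must argue that an iterator cannot produce $\varepsilon$'s \emph{forever without an intervening observable action}. This is precisely where Lemma~\ref{lemma:full-epsilon-path} is indispensable: it forbids a full traversal of an iterated body $P0$ (from $\redex{P}0$ to $P\redex{0}$) to be purely $\varepsilon$, so each loop iteration must contain at least one non-$\varepsilon$ step, and therefore an $\varepsilon$-only run cannot cross an iterator boundary. I expect the proof to combine this structural lemma with the well-founded measure above, and I would organize it as a structural induction on the process syntax mirroring the induction in Lemma~\ref{lemma:full-epsilon-path}, establishing the per-process bound first and then summing over the finitely many marked places of the graph.
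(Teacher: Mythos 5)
Your proposal is correct and follows essentially the same route as the paper: a structural induction assigning each prefix a finite $\varepsilon$-bound (0 for silent/input/output, 1 for match, sums and maxima for parallel and choice, summed over the iterators of the graph), with Lemma~\ref{lemma:full-epsilon-path} doing exactly the work you assign it, namely forbidding a purely-$\varepsilon$ traversal of an iterated body. One small nuance: an $\varepsilon$-run \emph{can} cross the iterator boundary once (a terminal $\varepsilon$-sequence, then [iter$_0$], [iter], then an initial one --- whence the paper's bound of twice the bound for $P$ plus $2$); what Lemma~\ref{lemma:full-epsilon-path} forbids is doing so repeatedly, which is the consequence your argument actually uses, so your slightly too strong phrasing (``cannot cross an iterator boundary'') does not affect the conclusion.
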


\begin{proof}
  First, note that we may neglect synchronisations, since they yield $\tau$-transitions and not $\varepsilon$-ones.
 If there are several iterators, we may interleave their longest $\varepsilon$-sequences and a bound is provided by the sum of the bounds of each component iterator.
  For any iterator $*[P0]$, we know from Lemma~\ref{lemma:full-epsilon-path} that no $\varepsilon$-sequence of $P$ may be both initial and terminal in a full path it generates; hence, 
besides the $\varepsilon$-sequences generated by $P$, we may also have a terminal one followed by [iter$_0$], followed by [iter], followed by an initial one (and we may not loop indefinitely on full $\epsilon$-paths), so that a bound is provided by twice the bound\footnote{Better bounds could be obtained by separately evaluating bounds for initial, terminal and intermediate $\varepsilon$-sequences of $P$.} for $P$, plus $2$.
  If $P=p_1 p_2 \ldots p_n$, a bound for the length of its $\varepsilon$-sequences is given by the sum of the bounds for each prefix $p_i$.
  The bound for the silent, input and output prefixes is 0; the one for the match is 1; a bound for the parallel prefix is the sum of the bounds of its components, plus 1 if all the corresponding $\varepsilon$-sequences are initial or (exclusively) terminal;  a bound for the choice prefix is the maximum of the bounds of its components, plus 1 (usually less since the initial $\varepsilon$-sequences are shrunk here).
\end{proof}

A fundamental characteristic of the proposed semantics is that it yields \emph{ground} transitions, involving only simple labels (no binders, equations, etc.).

\begin{definition} \label{def:lts}
Let $\pi$ be a graph. We denote $\fun{lts}(\pi)\defs\langle Q ,T \rangle$ its labelled transition system 
with $Q$ the set of graphs reachable from $\pi$, and $T$ the set of triplets of the form $(\pi',\alpha,\pi'')$, such that we can infer $\pi'\xrightarrow{\varepsilon^*\alpha}\pi''$, $\alpha\neq \varepsilon$, with the rules of Table~\ref{tab:semantics}. 
\end{definition}

The abstraction from $\varepsilon$-transitions, guaranteed \emph{finitely bound}
by Lemma~\ref{lemma:epsilon:term}, is an important part of the
definition because the normalization steps should not play any
direct behavioral role.

A first - important - step towards finiteness is as follows.

\begin{lemma}
\label{lemma:image:finite}
For any graph $\pi$, $\fun{lts}(\pi)$ is finitely branching. 
\end{lemma}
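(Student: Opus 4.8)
The plan is to decompose every transition in $T$ as an $\varepsilon$-sequence followed by a single observable step, and to bound each part separately. A triple $(\pi',\alpha,\pi'')\in T$ arises from a derivation $\pi'\xrightarrow{\varepsilon^*}\sigma\xrightarrow{\alpha}\pi''$ with $\alpha\neq\varepsilon$, where the last step applies exactly one of the observable rules [silent], [out], [o-fresh], [i-fresh], [sync], [sum]. Since distinct triples can only originate from distinct $(\sigma,\text{last step})$ pairs, it suffices to establish two facts: (i) the set $S$ of graphs $\varepsilon$-reachable from $\pi'$ is finite, and (ii) from every $\sigma\in S$ only finitely many observable single steps are enabled. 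The number of outgoing $T$-transitions is then bounded by $\sum_{\sigma\in S}|\{\text{observable steps from }\sigma\}|$, a finite sum of finite terms.

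For (i), I would show that the $\varepsilon$-reachable set is a finitely branching tree of bounded depth, hence finite. The depth is bounded by Lemma~\ref{lemma:epsilon:term}, which caps the length of every $\varepsilon$-sequence $\pi'$ may generate. For the fan-out, each single $\varepsilon$-step ([match], [sum$_0$], [par], [par$_0$], [iter], [iter$_0$]) is entirely determined by its redex, that is by the marked place or places on which it acts: the right-hand side clock, marking, partition and instantiation are functions of that redex. As $P$ and $B$ are finite by Definition~\ref{def:pigraph}, there are only finitely many candidate redexes, so each state has finitely many $\varepsilon$-successors. A bounded-depth, finitely branching tree is finite, so $|S|<\infty$. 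Note moreover that no $\varepsilon$-rule modifies $\kappa$, so the clock is constant across $S$, a fact used in the next step.

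For (ii), fix $\sigma\in S$ with clock $\kappa$. Every observable step is again triggered by a redex: a single marked place for [silent], [out], [o-fresh], [i-fresh], a pair of marked places in distinct parallel positions for [sync], and a marked $\sum$-place together with a chosen branch for [sum]; in each case there are finitely many such places or pairs. The crucial point is that the \emph{label} carries no infinite choice: for [out] it is the instantiated pair $\piout{\varphi}{\delta}$ read off the prefix, and for the freshness rules it is the \emph{single} canonical name $\fun{next_o}(\kappa)!$, respectively $\fun{next_i}(\kappa)?$, fixed by the state's clock. This is exactly where the ground design pays off: unlike an early $\pi$-calculus semantics, an input does not branch over the infinite domain of receivable names but deterministically generates one fresh identity, and name identifications are deferred to the partition $\gamma$ through the compatibility test $\compat{\gamma}_\kappa$ rather than realised by substitution. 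Since a fixed redex and rule determine the target uniquely, each yields one label and one successor.

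The one case requiring extra care, and the main obstacle I anticipate, is [sum]: its premise is itself an observable transition $\redex{P_i}\xrightarrow{\varepsilon^*\mu}\mathcal{P}_i$ of a branch. Here I would invoke the very same bound (bounded $\varepsilon$-depth by Lemma~\ref{lemma:epsilon:term} and finitely many labels by the argument of (ii)) applied to the structurally smaller branch process $P_i$, closing the case by induction on the process structure — which is well-founded since the number of places is finite. With [sum] handled this way, (ii) holds, and combining it with the finiteness of $S$ from (i) through the displayed sum gives finite branching of $\fun{lts}(\pi)$.
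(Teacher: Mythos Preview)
Your proposal is correct and follows essentially the same line as the paper: both rely on Lemma~\ref{lemma:epsilon:term} to bound $\varepsilon$-sequences, on the finiteness of places (hence redexes) in the graph, and single out [sum] as the one rule with intrinsic branching that needs a recursive argument. Your explicit decomposition into an $\varepsilon^*$-prefix followed by a final observable step, together with the observation that labels carry no infinite choice (fresh names are deterministically fixed by $\fun{next_o}(\kappa)$ and $\fun{next_i}(\kappa)$), makes rigorous what the paper's terse proof leaves implicit about the $\varepsilon^*$ in Definition~\ref{def:lts}.
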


\begin{proof}
Only the [sum] rule has directly more than one image. By Lemma
\ref{lemma:epsilon:term} the initial $\varepsilon$-sequences for each branch of
the sum have a finite, bounded length, hence there are finitely many of them. 
Moreover, there can be only a finite number of branches in a sum, which bounds the number of images. 

The other source of image-multiplicity is the interleaving of parallel iterators and/or sub-processes, but there are finitely many of them in a $\pi$-graph.
\end{proof}

Based on such (abstracted) labelled
transitions, a ground notion of bisimilarity naturally follows.

\begin{definition} (bisimilarity)\\
Bisimilarity $\bisim$ is the largest symmetric binary relation on $\pi$-graphs such that

$\pi_1\bisim \pi_2$ iff $\pi_1 \xrightarrow{\alpha} \pi '_1 \implies \exists \pi '_2,~\pi_2\xrightarrow{\alpha}\pi '_2$ and $\pi '_1\bisim \pi '_2$ 
\end{definition}

\section{Causal clocks and decidability results}
\label{sec:results}

There are two sources of \emph{infinity} in the basic $\pi$-graph
model. First, the partition $\gamma$ contains initially the singleton
subsets of the infinite set
$\mathcal{N}_f\cup\mathcal{N}_i\cup\mathcal{N}_o$, (only the
free, and fresh input/output names can be made equal). We need a way to
only retain the names that are actually playing a role in the behavior
of the considered $\pi$-graph. Moreover,
logical clocks can evolve infinitely. An example is the fresh name
generator of Figure~\ref{fig:ex:iterator}. To avoid the construction
of infinite state spaces, we first introduce an alternative to logical
clocks.
 
 \begin{definition}\label{def:causal:clock}
A \emph{causal} clock $\kappa$, in the context of an instantiation function $I$, is a partial function in 
$(\{\bot\} \cup \mathcal{N}_o) \rightarrow \powerset{\mathcal{N}_i} $ with
\begin{itemize}
\item $\fun{init} \defs \{\bot \mapsto \emptyset \}$
\item $\fun{out}(\kappa) \defs \kappa \cup \{ \fun{next_o}(\kappa)! \mapsto \emptyset \}$
\item $\fun{in}(\kappa) \defs \{ o \mapsto (\kappa(o) \cup \{\fun{next_i}(\kappa)?\} ) ~|~o\in \fun{dom}(\kappa)\}$
\item $\fun{next_i}(\kappa) \defs \fun{min}\left (\mathbb{N}^+\setminus \{ n\mid n?\in \bigcup \fun{cod}(\kappa)\} \right)$
\item $\fun{next_o}(\kappa) \defs \fun{min}\left (\mathbb{N}^+\setminus \{ n\mid n!\in \fun{dom}(\kappa) \} \right)$
\item $n! \prec_\kappa m? \defs n!\in \fun{dom}(\kappa) \wedge m? \in \kappa(n!)$
\end{itemize}
The names of a clock are $\fun{nm}(\kappa)\defs \fun{dom}(\kappa) \setminus \{\bot\} \cup \bigcup \fun{cod}(\kappa)$.
\end{definition}

Intuitively, $\kappa(n!)$ gathers all the input names $m?$ that were created after $n!$ when the latter was instantiated, 
and $\kappa(\bot)$ gathers all the input names $m?$ that were created, even those that were created before any $n!$. This
 is the minimal amount of information required to record read-write causality on names.

For example, $\fun{next_o}(\fun{init})=1$,  $\kappa\defs\fun{out}(\fun{init})=\{\bot \mapsto \emptyset, 1!\mapsto \emptyset\}$, $\kappa'\defs\fun{in}(\kappa)=\{\bot \mapsto \{1?\}, 1!\mapsto \{1?\}\}$, and $\fun{nm}(\kappa')=\{1!,1?\}$. In $\kappa'$, the input name $1?$ is causally dependent on the output $1!$.

As a second ``counter-measure'' against infinity, we do not record explicitly
(but assume) the singleton sets in the partition. Moreover, we require
the garbage collection for unused names in graphs.

\begin{definition} \label{def:gc} 
The garbage collection $\fun{gc}(\pi)$ of unused names in a graph $\pi$ with causal clock $\kappa$, partition $\gamma$ and instantiations $I$ is $\pi$ with updated clock $\kappa'$ and partition~$\gamma'$ 
such that\\
$\left \{\begin{array}{l}
\gamma' \defs \{ E\cap(\mathcal{N}_f\cup\mathcal{N}_o\cup\fun{cod}(I)) \suchthat E\in \gamma \} \setminus \{\emptyset\} 
\\
\kappa'\defs \{ d \mapsto \kappa(d) \cap \fun{cod}(I) \mid  d\in \fun{dom}(\kappa) \wedge (d=\bot \vee d\in \fun{cod}(I) \vee (\{d\}\not\in\gamma'))\} 
\end{array}\right .$
\end{definition}
For initial graphs, $\fun{gc}(\pi)=\pi$.  The clock only references
instantiated input and output names, plus the output names that are not instantiated but equated to one or more input names. 

From now on we only consider (reachable) garbage-free graphs, i.e.
with unused names implicitly removed. 
This amounts to consider the LTS
$\fun{lts}(\pi)\defs \{ (\pi', \alpha, \fun{gc}(\pi''))\mid 
(\pi',\alpha,\pi'') \mbox{ results from Def.~\ref{def:lts} }\}$. 

\begin{proposition} \label{prop:garbage:free}
Let $\pi$ be a garbage-free graph with clock $\kappa$, partition $\gamma$ and instantiation $I$:
\begin{enumerate}
\item $\fun{dom}(\kappa)= (\fun{cod}(I)\cap \mathcal{N}_o) \cup\{d\in\mathcal{N}_o|\{d\}\not\in\gamma\}\cup\{\bot\}$ and
\item $\bigcup \fun{cod}(\kappa)=\fun{cod}(I)\cap \mathcal{N}_i$ .
\end{enumerate}
Hence $\fun{nm}(\kappa) =(\fun{cod}(I)\cap (\mathcal{N}_o \cup \mathcal{N}_i))\cup\{n!\in \mathcal{N}_o|\{n!\}\not\in\gamma\}$. 
\end{proposition}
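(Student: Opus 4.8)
The plan is to exploit the fact that a garbage-free graph is, by definition, a fixed point of $\fun{gc}$: since $\fun{gc}(\pi)=\pi$ we have $\gamma=\gamma'$ and $\kappa=\kappa'$ for the $\gamma',\kappa'$ of Definition~\ref{def:gc}. The inclusions ``$\subseteq$'' of both parts then fall out of this fixed-point equation almost immediately. For part~1, substituting $\kappa=\kappa'$ into the defining formula for $\kappa'$ gives $\fun{dom}(\kappa)=\{d\in\fun{dom}(\kappa)\suchthat d=\bot\vee d\in\fun{cod}(I)\vee\{d\}\notin\gamma\}$; since $\fun{dom}(\kappa)\subseteq\{\bot\}\cup\mathcal{N}_o$ by Definition~\ref{def:causal:clock}, every non-$\bot$ element of the domain is an output name that is either instantiated or non-singleton, which is exactly the claimed inclusion. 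For part~2, $\kappa=\kappa'$ forces $\kappa(d)=\kappa(d)\cap\fun{cod}(I)$ on every retained $d$, so $\bigcup\fun{cod}(\kappa)\subseteq\fun{cod}(I)$; combined with $\bigcup\fun{cod}(\kappa)\subseteq\mathcal{N}_i$ (again by Definition~\ref{def:causal:clock}) this yields $\bigcup\fun{cod}(\kappa)\subseteq\fun{cod}(I)\cap\mathcal{N}_i$.

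The reverse inclusions ``$\supseteq$'' cannot come from the fixed-point equation alone, since $\fun{gc}$ never adds elements, so I would establish them as invariants of every reachable graph, proved by induction on the rules of Table~\ref{tab:semantics} together with the $\fun{gc}$ step. Concretely: (i) $\bot\in\fun{dom}(\kappa)$ holds for $\fun{init}$ and is preserved by $\fun{out}$, $\fun{in}$ and the $d=\bot$ clause of $\fun{gc}$; (ii) every $n!\in\fun{cod}(I)\cap\mathcal{N}_o$ lies in $\fun{dom}(\kappa)$, because the rule [o-fresh] that introduces a fresh output name into $\fun{cod}(I)$ simultaneously performs $\fun{out}(\kappa)$ and thereby inserts it into the domain, while any other rule (notably [sync]) only copies an already-instantiated and hence already-recorded name, and neither the later clock operations nor $\fun{gc}$ (which keeps $d$ whenever $d\in\fun{cod}(I)$) can remove it while it remains instantiated; (iii) symmetrically, every $m?\in\fun{cod}(I)\cap\mathcal{N}_i$ lies in $\kappa(\bot)$, since [i-fresh] adds $\fun{next_i}(\kappa)?$ to $\kappa(o)$ for all $o\in\fun{dom}(\kappa)$, in particular to $\kappa(\bot)$, which $\fun{gc}$ retains restricted to $\fun{cod}(I)$; (iv) every $n!$ with $\{n!\}\notin\gamma$ lies in $\fun{dom}(\kappa)$: by Proposition~\ref{prop:no-two-output-equivalence} its class contains some input $m?$, and the merge that first placed $n!$ and $m?$ in a common class (in [match] or [sync]) required $n!\compat{\gamma}_\kappa m?$, hence $n!\equi_\kappa m?$, which by Definition~\ref{def:compat} forces $n!\prec_\kappa m?$ and thus $n!\in\fun{dom}(\kappa)$; the $\{d\}\notin\gamma$ clause of $\fun{gc}$ then keeps $n!$ in the domain as long as it stays non-singleton.

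Combining the two directions gives parts 1 and 2, and the final equality follows by unfolding $\fun{nm}(\kappa)=(\fun{dom}(\kappa)\setminus\{\bot\})\cup\bigcup\fun{cod}(\kappa)$ and substituting: the domain part contributes $(\fun{cod}(I)\cap\mathcal{N}_o)\cup\{n!\suchthat\{n!\}\notin\gamma\}$ and the codomain part contributes $\fun{cod}(I)\cap\mathcal{N}_i$, whose union regroups into $(\fun{cod}(I)\cap(\mathcal{N}_o\cup\mathcal{N}_i))\cup\{n!\in\mathcal{N}_o\suchthat\{n!\}\notin\gamma\}$. I expect the main obstacle to be the invariant proof of step (iv) and, more generally, the interaction between iterator reinitialization and $\fun{gc}$: when [iter$_0$] resets a box a name may leave $\fun{cod}(I)$, so one must check that $\fun{gc}$ then drops exactly the newly-unused names from both $\kappa$ and $\gamma$ coherently --- in particular that an output name whose matching input has become garbage reverts to a singleton class and is correspondingly removed from the domain --- so that the two equalities persist as a genuine invariant rather than holding only at isolated states.
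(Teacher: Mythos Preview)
Your proof is correct and rests on the same device as the paper's --- an induction on the operational rules together with the garbage-collection step --- but you organise it more transparently. The paper's proof simply asserts both equalities as a single invariant maintained along every transition: it checks the initial configuration, observes that [i-fresh] (resp.\ [o-fresh]) simultaneously enlarges $\fun{cod}(I)$ and $\bigcup\fun{cod}(\kappa)$ (resp.\ $\fun{dom}(\kappa)$), and notes that when a name leaves $\fun{cod}(I)$ the $\fun{gc}$ step removes it from the clock accordingly. You instead split each equality into two inclusions and observe that the direction $\subseteq$ is an immediate consequence of the fixed-point equation $\fun{gc}(\pi)=\pi$, with no induction needed; only the direction $\supseteq$ requires the rule-by-rule argument. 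This factoring is a mild but genuine simplification, and it also isolates the one delicate point --- your step (iv), that a fresh output lying in a non-singleton class must already belong to $\fun{dom}(\kappa)$ --- which the paper's proof leaves implicit. Your closing worry about the interaction between [iter$_0$] and $\fun{gc}$ is precisely what the paper dispatches in its last two sentences, so nothing further is required there.
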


\begin{proof}
These are direct consequences of Definition~\ref{def:causal:clock} and Definition~\ref{def:gc}, combined with an induction on the derivation rules.\\
Initially, $\bigcup \fun{cod}(\kappa)=\emptyset=\fun{cod}(I)\cap \mathcal{N}_i$, 
$\fun{dom}(\kappa)= \{\bot\}$, $\fun{cod}(I)\cap \mathcal{N}_o=\emptyset$ and $\gamma$ is only composed of singletons.\\
When a new input name is created by rule [i-fresh], it is added both to $\fun{cod}(I)$ and to $\kappa(\bot)$.\\
When a new output name is created by rule [o-fresh], it is added both to $\fun{cod}(I)$ and to $\fun{dom}(\kappa)$.\\
When an input name is no longer used by $I$, it is suppressed from $\fun{cod}(\kappa)$.\\
When an output name is no longer used by $I$ and it is not equated to some input names, it is suppressed from $\fun{dom}(\kappa)$.
\end{proof}

\begin{proposition}
Causal clocks preserve the freshness constraint.
\end{proposition}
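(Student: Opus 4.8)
The plan is to show that the freshness constraint of Definition~\ref{def:freshness}, namely $\fun{next_o}(\kappa)!\not\in\fun{cod}(I)$ and $\fun{next_i}(\kappa)?\not\in\fun{cod}(I)$, is preserved by every evolution rule of Table~\ref{tab:semantics} when $\kappa$ is a causal clock. Since initial graphs satisfy the constraint (as already noted after Definition~\ref{def:freshness}), it suffices to argue inductively that each rule preserves it; I would in fact prove the stronger invariant that for a garbage-free graph, $\fun{next_o}(\kappa)!$ and $\fun{next_i}(\kappa)?$ are fresh not merely with respect to $\fun{cod}(I)$ but with respect to all names recorded in the clock. The key tool is Proposition~\ref{prop:garbage:free}, which pins down exactly which output and input names appear in $\fun{dom}(\kappa)$ and $\bigcup\fun{cod}(\kappa)$ in terms of $\fun{cod}(I)$ and the partition $\gamma$.

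First I would unwind the definitions of $\fun{next_o}$ and $\fun{next_i}$ for causal clocks. By Definition~\ref{def:causal:clock}, $\fun{next_o}(\kappa)=\fun{min}(\mathbb{N}^+\setminus\{n\mid n!\in\fun{dom}(\kappa)\})$, so $\fun{next_o}(\kappa)!$ is chosen precisely so that $\fun{next_o}(\kappa)!\not\in\fun{dom}(\kappa)$; dually $\fun{next_i}(\kappa)?\not\in\bigcup\fun{cod}(\kappa)$. Thus freshness relative to the clock itself holds \emph{by construction of the $\fun{min}$}, with no induction needed. The work is then to transfer this to freshness relative to $\fun{cod}(I)$. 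By Proposition~\ref{prop:garbage:free}(2), $\bigcup\fun{cod}(\kappa)=\fun{cod}(I)\cap\mathcal{N}_i$, so $\fun{next_i}(\kappa)?\not\in\bigcup\fun{cod}(\kappa)$ immediately yields $\fun{next_i}(\kappa)?\not\in\fun{cod}(I)$, since any input name in $\fun{cod}(I)$ already lies in $\bigcup\fun{cod}(\kappa)$. For outputs, Proposition~\ref{prop:garbage:free}(1) gives $\fun{cod}(I)\cap\mathcal{N}_o\subseteq\fun{dom}(\kappa)$, so $\fun{next_o}(\kappa)!\not\in\fun{dom}(\kappa)$ likewise forces $\fun{next_o}(\kappa)!\not\in\fun{cod}(I)$.

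The main obstacle I anticipate is the interaction with garbage collection, since the clock we must reason about is $\fun{gc}(\pi'')$ after each transition, not the raw clock produced by the rule. The worry is whether collecting an output name from $\fun{dom}(\kappa)$ could ``free up'' an index that $I$ still uses, breaking the $\fun{cod}(I)\cap\mathcal{N}_o\subseteq\fun{dom}(\kappa)$ inclusion. Here is exactly where Definition~\ref{def:gc} is engineered to help: the collected clock $\kappa'$ retains every $d\in\fun{dom}(\kappa)$ with $d\in\fun{cod}(I)$ (the clause $d\in\fun{cod}(I)$ in the guard), so no name still instantiated by $I$ is ever dropped, and the invariant of Proposition~\ref{prop:garbage:free} is exactly the one maintained by $\fun{gc}$. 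Consequently the $\fun{min}$ computation on the garbage-collected clock cannot pick an index colliding with $\fun{cod}(I)$. I would therefore close the argument by simply invoking Proposition~\ref{prop:garbage:free} for the garbage-free graph reached after the transition, rather than reverifying freshness rule-by-rule, reducing the whole proof to the two set-theoretic containments above.

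One subtlety worth flagging explicitly is the output name that is \emph{not} instantiated by $I$ but is equated in $\gamma$ to some input name; Proposition~\ref{prop:garbage:free}(1) shows such $n!$ also remains in $\fun{dom}(\kappa)$ (via the clause $\{d\}\not\in\gamma'$), so $\fun{next_o}$ avoids those indices too. This matters because Proposition~\ref{prop:no-two-output-equivalence} guarantees such a class contains exactly one output name, so the correspondence between retained clock entries and genuinely-used output indices is clean and the $\fun{min}$ never aliases a name that is semantically live. Thus the freshness constraint is preserved, completing the proof.
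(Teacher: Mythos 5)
Your argument is correct and is essentially the paper's own proof: both rest on the observation that $\fun{next_o}(\kappa)!\not\in\fun{dom}(\kappa)$ and $\fun{next_i}(\kappa)?\not\in\bigcup\fun{cod}(\kappa)$ hold by construction of the $\fun{min}$ in Definition~\ref{def:causal:clock}, and then transfer this to $\fun{cod}(I)$ via the containments of Proposition~\ref{prop:garbage:free}. Your additional remarks on garbage collection and on non-instantiated but $\gamma$-equated output names are sound but only re-derive facts already packaged into Proposition~\ref{prop:garbage:free}, which the paper simply invokes.
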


\begin{proof}
  Let $\pi$ be a graph with causal clock $\kappa$ and instantiation
  $I$. By Definition~\ref{def:causal:clock}, we have 
  $\fun{next_o}(\kappa)!\not\in\fun{dom}(\kappa)$
  and $\fun{next_i}(\kappa)?\not\in \bigcup \fun{cod}(\kappa)$. By Proposition~\ref{prop:garbage:free} we
 deduce $\fun{next_o}(\kappa)!\not \in \fun{cod}(I)$ and $\fun{next_i}(\kappa)?\not \in \fun{cod}(I)$ 
\end{proof}

The example of Figure~\ref{fig:ex:iterator} generates, with the logical clocks, an infinite number of states and transitions $\piout{c}{1!},~\piout{c}{2!},\ldots$. 
Using the causal clocks and garbage-free graphs, the behavior collapses to a single state (i.e., a single $\sim$-equivalence class) 
and transition $\piout{c}{1!}$, which is valid because the name $1!$ is not used locally
  and can thus be reused infinitely often. 



We now consider the evolution of the clock along transition
paths from a more general perspective. A fundamental property is that the clock may take only a finite number of values.

\begin{lemma} 
\label{lemma:finite:inputs}
Let a transition system $\fun{lts}(\pi)=\langle Q, T \rangle$ and
consider the causal clock $\kappa_Q$ of each state $Q$: $\bigcup_Q \fun{cod}(\kappa_Q)\subseteq\{1?,2?,\ldots,|B|?\}$.
\end{lemma}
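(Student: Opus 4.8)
The plan is to bound the index of every fresh input name that can ever be minted, and then conclude that all such names lie in $\{1?,\ldots,|B|?\}$. The first fact I would record is that input names enter a clock only through the rule [i-fresh]: the operations $\fun{in}$ and $\fun{out}$ of Definition~\ref{def:causal:clock} merely copy or drop names already present, and garbage collection (Definition~\ref{def:gc}) only deletes entries, so once created a name is never renumbered. Hence it suffices to show that at every application of [i-fresh] the freshly generated index $\fun{next_i}(\kappa)=\fun{min}\left(\mathbb{N}^+\setminus\{n\mid n?\in\bigcup\fun{cod}(\kappa)\}\right)$ is at most $|B|$.

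To bound this minimum I would count the forbidden indices. Since from now on only garbage-free graphs are considered, Proposition~\ref{prop:garbage:free} gives $\bigcup\fun{cod}(\kappa)=\fun{cod}(I)\cap\mathcal{N}_i$ at the state where [i-fresh] fires, so the forbidden set $\{n\mid n?\in\bigcup\fun{cod}(\kappa)\}$ is in bijection with the distinct fresh-input names currently stored in the boxes. As each such name occupies at least one of the $|B|$ boxes, the number of distinct input names in use is bounded by the number of boxes holding an input name; I need the sharper bound $|B|-1$.

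The crucial step is to argue that the box $b$ being instantiated by the firing rule --- whose static name $\fun{bn}(b)=\iota x$ lies in $\mathcal{N}_b$, hence outside $\mathcal{N}_i$ --- does not itself hold a fresh input name just before the firing. This follows from two facts: $b$ is reinitialised to its static default by [iter$_0$] at the close of each iteration, and within a single iteration the input place attached to $b$ is the unique point at which $b$ is written, so when the token first reaches it (and [i-fresh] applies) $b$ still carries its default $\iota x\notin\mathcal{N}_i$. Therefore at most $|B|-1$ boxes carry input names, the forbidden set has at most $|B|-1$ elements, some index of $\{1,\ldots,|B|\}$ escapes it, and $\fun{next_i}(\kappa)\le |B|$. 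As every minted input name thus has index at most $|B|$ and indices are never reassigned, collecting over all states yields $\bigcup_Q\fun{cod}(\kappa_Q)\subseteq\{1?,\ldots,|B|?\}$.

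I expect this crucial step to be the main obstacle, since it is exactly where the garbage-collection discipline is indispensable: Proposition~\ref{prop:garbage:free} is what guarantees that the stale input names produced in earlier (now completed) iterations have been reclaimed from $\kappa$, so that they no longer block the small indices. Without it the forbidden set would accumulate across iterations and $\fun{next_i}$ could grow without bound --- precisely the source of infinity that the causal clocks and garbage collection are designed to defeat (cf. the generator of Figure~\ref{fig:ex:iterator}).
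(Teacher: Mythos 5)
Your proof is correct and follows essentially the same route as the paper's: both rest on Proposition~\ref{prop:garbage:free}(2) to identify $\bigcup\fun{cod}(\kappa)$ with $\fun{cod}(I)\cap\mathcal{N}_i$ (hence at most $|B|$ names), on [i-fresh] being the only rule that mints input names, and on the $\fun{min}$ in $\fun{next_i}$ always filling the smallest gap. Your extra observation that the target box still holds its default $\iota x\notin\mathcal{N}_i$ when [i-fresh] fires, giving the sharper bound of $|B|-1$ occupied indices, is a welcome refinement: the paper's proof leaves implicit why the codomain cannot already equal $\{1?,\ldots,|B|?\}$ at that moment (in which case the minted index would be $|B|+1$), and your argument closes exactly that step.
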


\begin{proof}
  First, a direct consequence of Proposition~\ref{prop:garbage:free}(2) is that $|\bigcup_Q \fun{cod}(\kappa_Q)|\leq |B|$, since $|\fun{cod}(I)|\leq|\fun{dom}(I)|=|B|$.
  Initially, $\bigcup_Q \fun{cod}(\kappa_Q)=\emptyset$.
  The unique way to increase the size of the codomain of a clock (by one) is
  through an [i-fresh] transition. If, at that point, $k$ is the first integer such that $k?$ is missing in $\bigcup_Q \fun{cod}(\kappa_Q)$, it will be added to it. Thus we shall have either $\bigcup_Q \fun{cod}(\kappa_Q)=\{1?,2?,\ldots,(k-1)?,(k+h)?\ldots\}$ becomes $\{1?,2?,\ldots,(k-1)?,k?,(k+h)?\ldots\}$ or $\bigcup_Q \fun{cod}(\kappa_Q)=\{1?,2?,\ldots,(k-1)?\}$ becomes $\{1?,2?,\ldots,(k-1)?,k?\}$. Hence the property.
\end{proof}

For the fresh outputs the situations is similar, but for a slightly different reason.

\begin{lemma} 
\label{lemma:finite:outputs}
Let a transition system $\fun{lts}(\pi)=\langle Q, T \rangle$ and
consider the causal clock $\kappa_Q$ of each state $Q$: $\fun{dom}(\kappa_Q)\subseteq\{\bot,1!,2!,\ldots,|B|!\}$.
\end{lemma}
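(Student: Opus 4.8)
The plan is to mirror the inductive argument of Lemma~\ref{lemma:finite:inputs}, tracking for each reachable (garbage-free) state $Q$ the set $D_Q\defs\fun{dom}(\kappa_Q)\setminus\{\bot\}$ of output names recorded in the clock, and to show by induction on the transition relation that $D_Q\subseteq\{1!,\ldots,|B|!\}$. The ``slightly different reason'' is that, unlike for inputs, one cannot bound $|D_Q|$ directly by $|\fun{cod}(I)|$: by Proposition~\ref{prop:garbage:free}(1) an output name may persist in $\fun{dom}(\kappa_Q)$ even after it has left $\fun{cod}(I)$, namely when it has been equated (through [match] or [sync]) to an input name still in use. So the first and main step is to re-establish a size bound $|D_Q|\leq|B|$ by a more careful counting.

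For the size bound I would exhibit an injective ``anchoring'' map $D_Q\to B$. Take $d=n!\in D_Q$. By Proposition~\ref{prop:garbage:free}(1) either $d\in\fun{cod}(I)$, in which case some box $b$ has $I(b)=d$; or $\{d\}\notin\gamma$, in which case, by Proposition~\ref{prop:no-two-output-equivalence}, the class $[d]_\gamma$ consists of $d$ together with one or more fresh input names, and the garbage collection of $\gamma$ (Definition~\ref{def:gc}) forces at least one such input $m?$ to lie in $\fun{cod}(I)$, so some box $b$ has $I(b)=m?\in[d]_\gamma$. Mapping $d$ to such a $b$ is injective: since $I$ is single-valued a box holds exactly one name, and by Proposition~\ref{prop:no-two-output-equivalence} the unique class of any input name contains at most one output; hence no box can witness two distinct elements of $D_Q$. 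This gives $|D_Q|\leq|B|$.

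The value bound then follows by induction. Only rule [o-fresh] introduces a fresh output integer; every other rule manipulates already-instantiated names (in particular [match] and [sync] merely equate existing names), and garbage collection only removes output names, so none of them can exceed the current maximum, which is $\leq|B|$ by the induction hypothesis. For an [o-fresh] step from $Q$, the emitted value is $\fun{next_o}(\kappa_Q)!=\fun{min}\left(\mathbb{N}^+\setminus\{n\mid n!\in D_Q\}\right)!$, i.e. it fills the smallest gap. Here I would use the refinement that the data box of the redex still holds its private name $\nu\alpha\in\fun{Priv}$ at the moment the rule fires; this box therefore anchors no output and is missing from the image of the anchoring map, so $|D_Q|\leq|B|-1$. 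Consequently $\{n\mid n!\in D_Q\}$ misses at least one element of $\{1,\ldots,|B|\}$, the minimum missing value is $\leq|B|$, and the new name is $\leq|B|!$. This preserves the invariant $D_Q\subseteq\{1!,\ldots,|B|!\}$, and adjoining $\bot$ yields $\fun{dom}(\kappa_Q)\subseteq\{\bot,1!,\ldots,|B|!\}$ for every reachable state.

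The step I expect to be the main obstacle is the size bound itself: getting the injectivity of the anchoring map right, and in particular checking that Definition~\ref{def:gc} really does guarantee a live input witness in $\fun{cod}(I)$ for each equated output retained in the clock domain. Once $|D_Q|\leq|B|$ (and $\leq|B|-1$ just before an emission) is secured, the value bound is a routine replay of the minimum-gap argument of Lemma~\ref{lemma:finite:inputs}.
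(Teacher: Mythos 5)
Your proof is correct and follows essentially the same route as the paper: the paper likewise uses Proposition~\ref{prop:garbage:free} to split $\fun{dom}(\kappa_Q)\setminus\{\bot\}$ into $k$ instantiated output names plus at most $|B|-k$ non-instantiated ones, each tied to a distinct instantiated input via $\gamma$ and Proposition~\ref{prop:no-two-output-equivalence} (exactly your anchoring injection), and then concludes by the same minimum-gap argument as in Lemma~\ref{lemma:finite:inputs}. Your explicit refinement $|D_Q|\leq |B|-1$ at the source of an [o-fresh] step (because the data box still holds a private name) is a detail the paper leaves implicit under ``the reasoning is similar'', and it is a valid way to close that step.
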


\begin{proof}
  From Proposition~\ref{prop:garbage:free}, we know that $\fun{dom}(\kappa_Q)$ always contains $\bot$ and the instantiated output names; let us assume there are $k$ of the latter; there are thus at most $|B|-k$ instantiated input names; now each non-instantiated output name may only be equated by $\gamma$ to instantiated input names and there is no intersection between the latter; hence there are at most $|B|-k$ non-instantiated output names left in $\fun{dom}(\kappa_Q)$. Then, the reasoning is similar to the one for Lemma~\ref{lemma:finite:inputs}.
\end{proof}



\begin{lemma} \label{lemma:finite:states} 
Let $\pi$ be a graph with causal clocks, and $\fun{lts}(\pi)=\langle Q,T \rangle$ its corresponding transition system.
The sets $Q$ and $T$ are of finite size.
\end{lemma}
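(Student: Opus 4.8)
The plan is to show that $Q$ is finite and then deduce that $T$ is finite. The transition system $\fun{lts}(\pi)$ records states consisting of a graph together with its dynamic data: the clock $\kappa$, the partition $\gamma$, the marking $M$, and the instantiation $I$. Since the underlying static structure (the places $P$, boxes $B$, the typing and linking functions) never changes along transitions, finiteness of $Q$ reduces to bounding the number of distinct values each dynamic component may take. First I would observe that the marking $M\colon P \to \{\circ,\emptyset\}$ ranges over a finite set since $P$ is finite, giving at most $2^{|P|}$ markings.

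Next I would bound the instantiation function $I\colon B \to \mathcal{N}$. The key point is that, for garbage-free graphs, the only names appearing in $\fun{cod}(I)$ beyond the static names (which are fixed and finite, being determined by $\fun{bn}$) are fresh inputs and fresh outputs. By Lemma~\ref{lemma:finite:inputs} every fresh input name lies in $\{1?,\ldots,|B|?\}$, and by Lemma~\ref{lemma:finite:outputs} every fresh output name lies in $\{1!,\ldots,|B|!\}$. Hence the range of $I$ is confined to a fixed finite subset of $\mathcal{N}$ of size bounded in terms of $|B|$ and the static names, so there are only finitely many possible instantiations $I$. Once $I$ is fixed, Proposition~\ref{prop:garbage:free} pins down $\bigcup\fun{cod}(\kappa)$ and largely constrains $\fun{dom}(\kappa)$; combined with Lemmas~\ref{lemma:finite:inputs} and~\ref{lemma:finite:outputs}, the causal clock $\kappa$ is a partial function from a finite set $\{\bot,1!,\ldots,|B|!\}$ into the finite power set $\powerset{\{1?,\ldots,|B|?\}}$, so there are finitely many causal clocks.

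It remains to bound the partition $\gamma$. By the garbage-collection definition (Definition~\ref{def:gc}), each class of $\gamma$ is a subset of $\mathcal{N}_f\cup\mathcal{N}_o\cup\fun{cod}(I)$ with singletons left implicit, and by Proposition~\ref{prop:no-two-output-equivalence} every non-singleton class contains at most one output name together with fresh input names. Since the relevant free names are those occurring statically in the finite graph, and the fresh input/output names are confined to the finite sets above, only finitely many names can participate in non-singleton classes; thus $\gamma$ is determined by a partition of a fixed finite set of names, and there are finitely many such partitions. Combining the four bounds, $Q$ is finite. Finally, $T$ is finite because by Lemma~\ref{lemma:image:finite} the system is finitely branching, so each of the finitely many states in $Q$ has finitely many outgoing transitions, whence $|T|$ is finite as well.

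The main obstacle I anticipate is the clean bounding of the partition $\gamma$: unlike the clock and instantiation, whose finiteness follows almost directly from the preceding propositions, $\gamma$ is a priori a partition of an infinite set, and one must argue carefully that garbage collection together with Proposition~\ref{prop:no-two-output-equivalence} restricts the \emph{non-trivial} part of the partition to a finite name set. The subtle point is that free names $\mathcal{N}_f$ form an infinite set, so I would need to argue that only the finitely many free names actually appearing in the graph (via $\fun{bn}$ and $\fun{cod}(I)$) can lie in non-singleton classes after garbage collection, with all other free names remaining in implicit singletons and hence contributing nothing to the recorded state.
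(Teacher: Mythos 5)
Your proposal is correct and follows essentially the same route as the paper: it reduces finiteness of $Q$ to bounding the four dynamic components (marking, instantiation, causal clock, partition) using Lemmas~\ref{lemma:finite:inputs} and~\ref{lemma:finite:outputs} together with Proposition~\ref{prop:garbage:free} and the garbage-collection definition, and then concludes finiteness of $T$ from finite branching (Lemma~\ref{lemma:image:finite}) over a finite state set. The extra care you take with the partition $\gamma$ (restricting non-singleton classes to the finitely many names in $\fun{cod}(I)\cup\fun{nm}(\kappa)$, with all other free names remaining implicit singletons) is exactly the observation the paper relies on at that step.
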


\begin{proof}
  Each state of $Q$ is a reachable configuration following
  Definition~\ref{def:pigraph}. Infinity can only result from the
  parts of the configuration that evolve along transitions, i.e., the
  clock $\kappa$, the partition $\gamma$, the instantiation $I$ and
  the marking
  $M$. There is a finite bound
  for the number of possible markings ($2^p$ where $p$ is the number
  of places in the configuration). Lemmas
  ~\ref{lemma:finite:inputs} and~\ref{lemma:finite:outputs} assert that the set of reachable
  (causal) clocks is also finite. For the instantiation $I$, only the number of input and    
  output fresh names may increase. We can deduce from Proposition~\ref{prop:garbage:free} 
  that $\fun{cod}(I)\cap (\mathcal{N}_i\cup\mathcal{N}_o)\subseteq \fun{nm}(\kappa)$
  and thus the set of reachable instantiations is also finite. We can then observe that, from the previous definitions, the non-singleton classes in a partition only contain names in $\fun{cod}(I)\cup\fun{nm}(\kappa)$,
hence the number of reachable
 partitions is finite. In consequence there are only finitely many reachable configurations, thus $Q$ is finite.
 Finally, by Lemma~\ref{lemma:image:finite} we know that $T$ is image-finite and a finitely branching relation
over the finite set $Q$ is finite.
\end{proof}

\begin{theorem}
Bisimilarity for $\pi$-graphs with causal clocks is decidable
\end{theorem}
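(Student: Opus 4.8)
The plan is to reduce decidability of bisimilarity to a finiteness argument about the underlying transition system, since all the heavy lifting has already been done in the preceding lemmas. The key observation is that for $\pi$-graphs equipped with causal clocks, Lemma~\ref{lemma:finite:states} guarantees that $\fun{lts}(\pi)=\langle Q,T\rangle$ is a \emph{finite} labelled transition system: both the state set $Q$ and the transition set $T$ are of finite size. Once finiteness is in hand, decidability of bisimilarity becomes a standard result about finite-state systems, so the proof is essentially an application of Lemma~\ref{lemma:finite:states} together with a classical algorithm.

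First I would invoke Lemma~\ref{lemma:finite:states} to obtain that, for any $\pi$-graph $\pi$ with causal clocks, the transition system $\fun{lts}(\pi)$ has finitely many states and finitely many transitions. I would then recall the definition of ground bisimilarity $\bisim$ from the excerpt: since the labels $\alpha$ are simple (ground) labels with no binders or equations, checking $\pi_1\bisim\pi_2$ reduces to an ordinary bisimilarity check between two finite LTSs. Because bisimilarity on finite labelled transition systems is decidable --- for instance via the classical partition-refinement algorithm of Kanellakis and Smolka, or Paige--Tarjan --- we can effectively compute whether $\pi_1\bisim\pi_2$. Concretely, one forms the disjoint union of $\fun{lts}(\pi_1)$ and $\fun{lts}(\pi_2)$, starts from the trivial partition identifying all states carrying the same observable structure, and iteratively splits blocks according to the transition relation until a fixed point is reached; the algorithm terminates because $Q$ is finite, and $\pi_1\bisim\pi_2$ holds precisely when the two roots end up in the same block.

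The essential content of the argument is therefore already encapsulated in Lemma~\ref{lemma:finite:states}, whose proof in turn rests on the finiteness of reachable clocks (Lemmas~\ref{lemma:finite:inputs} and~\ref{lemma:finite:outputs}), instantiations, partitions and markings, together with the finite branching from Lemma~\ref{lemma:image:finite} and the finite-bound property of $\varepsilon$-sequences from Lemma~\ref{lemma:epsilon:term}. I would make explicit the point that the causal-clock model is what makes this possible: with the logical clocks of Section~\ref{sec:language} the state space could be infinite, as the fresh-name generator of Figure~\ref{fig:ex:iterator} demonstrates, but the garbage collection of Definition~\ref{def:gc} combined with the causal clocks of Definition~\ref{def:causal:clock} bounds everything finitely.

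The main obstacle, honestly, is not in this final step at all --- it is entirely inherited from Lemma~\ref{lemma:finite:states} and its supporting results. The delicate part that had to be established earlier is that the causal clock's codomain and domain stay bounded by $|B|$ (the finite number of boxes), which relies on the subtle interplay between Proposition~\ref{prop:garbage:free} and the reuse of freed name identities guaranteed by $\fun{next_i}$ and $\fun{next_o}$ always picking the \emph{minimal} missing index. Given those lemmas, the present theorem is a clean corollary: decidability of bisimilarity over a finite LTS with ground labels is well known, so the only care required is to confirm that the abstracted transition relation of Definition~\ref{def:lts} (collapsing the finitely-bounded $\varepsilon$-sequences) indeed yields the finite structure to which the standard algorithm applies.
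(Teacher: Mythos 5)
Your proposal is correct and follows exactly the paper's own route: the paper states that the theorem ``is a direct consequence of Lemma~\ref{lemma:finite:states}'', and you fill in the standard detail that ground bisimilarity over a finite LTS is decidable by partition refinement. No gap; your elaboration of the classical algorithm is just a more explicit version of what the paper leaves implicit.
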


This important result is a direct consequence of Lemma~\ref{lemma:finite:states}.

\section{Related work}
\label{sec:related}


The design of visual languages for mobile systems has been
investigated in Milner's $\pi$-nets~\cite{DBLP:conf/esop/Milner94} and
Parrow's interaction diagrams~\cite{DBLP:journals/njc/Parrow95}.  The
$\pi$-graphs try to convey the ``inventiveness'' of such attempts but
building on more formal grounds and with an emphasis on
\emph{practicability} from a modelling perspective.  The main
characteristic of our formalism, from this point of view, is the fact
that the structure of the graphs remains \emph{static} along 
transitions. This is a major difference when compared to other graphic
variants of the $\pi$-calculus~\cite{gadducci:pigraphenc:mscs:2006},
including the \emph{dynamic}
$\pi$-graphs~\cite{DBLP:conf/sofsem/PeschanskiB09}. From a technical
standpoint this design choice has a profound impact on the
semantics. Instead of relying on more expressive graph rewriting
techniques~\cite{gadducci:pigraphenc:mscs:2006,DBLP:journals/iandc/BonchiGK09},
we exploit an inductive variant of \emph{graph
  relabelling}~\cite{graph:relabelling:handbook}.  The inductive
extension is used to characterize the choice
operator.  A lower-level implementation is possible (see
e.g.~\cite{devillers-klaudel-koutny08}) but inductive rules provide a much more
 concise characterization.


Similarly to Petri nets, the motivation behind the $\pi$-graphs is not
limited to \emph{modelling} purposes. The formalism should be
\emph{also} suitable for the automated verification of mobile systems.
There are indeed only a few verification techniques and tools
developed for the $\pi$-calculus and variants. Decision procedures for
open bisimilarity are proposed in
e.g.~\cite{DBLP:journals/iandc/PistoreS01,DBLP:conf/cav/VictorM94}. The
techniques developed are not trivial and specific to the
$\pi$-calculus (or also the fusion calculus in recent versions
of~\cite{DBLP:conf/cav/VictorM94}). In comparison, the $\pi$-graphs
rely on ground notions of transition and bisimulation, which means
standard techniques and existing tools can be directly employed. There
is a connection between the symbolic semantics used to characterize
open bisimilarity and the partition $\gamma$ in the $\pi$-graph
configurations. Instead of recording equalities in transitions, we
record the effect of the equality directly in the states. This means
it is never required to ``go back in time'' to recover a particular
equality. Moreover, we think a similar mechanism can be used to
implement the mismatch construct. Open bisimilarity enjoys a much
desired congruence property. It remains an open question whether
bisimilarity on $\pi$-graphs is a congruence or not. We conjecture
this is the case, e.g.  $\piin{a}{x}[x=b]\piout{b}{c}$ and
$\piin{a}{x}0$ are properly discriminated. However the formal proof is
left as a future work.

Another approach is to translate some $\pi$-calculus variant into
another formalism with better potential for verification. A positive
aspect is that this makes the verification framework (relatively)
independent from the source language. The other side of the coin is
that it is more difficult to connect the verification results
(e.g. counter-examples) to the modelling formalism.  The early
labelled transition systems for the $\pi$-calculus can be translated
to history dependent automata
(HDA)~\cite{DBLP:conf/concur/MontanariP95,Montanari-Pistore-HD}.  The
states of HDA contain the sets of active (restricted) names, and the
transitions provide injective correspondences so that names can be
created and, most importantly, forgotten. This gives a local
interpretation of freshness whereas the $\pi$-graphs use a global
interpretation using clocks. Unlike HDA, the problem of garbage
collecting unused names in $\pi$-graphs can be decided by inspecting
the current state of the computation. HDA is an intermediate
semantic-level formalism. They are produced from process expressions
and can in turn be unfolded as plain automata. With $\pi$-graph, we
are able to produce basic (ground) automata directly.

There are also various translations of Pi-calculus variants
into Petri nets.  In~\cite{devillers-klaudel-koutny08} we propose a
translation of the $\pi$-calculus into finite high-level Petri nets
(with read arcs), using basic net composition operators.  Beyond the
use of a high-level (and Turing-complete) model of Petri nets, another
issue we face is the encoding of recursive behaviors as unfolding.
Indeed, the verification problems are only decidable for
recursion-free processes in this
framework. In~\cite{DBLP:conf/concur/MeyerG09} an alternative
translation to lower-level P/T nets is proposed. The translated nets
cannot be used as modelling artifacts. First, they may have a size
exponentially larger than the initial $\pi$-calculus terms. Moreover
their structure does not reflect the structure of the terms but
corresponds to behavioral properties: the places are connection
patterns and the tokens instances of these patterns.  However, the
translation is particularly suitable for the verification
problem. Indeed, the translated P/T nets have a finite size for a
class of \emph{structural stationary} systems, which is strictly
larger than finite-control processes. Note, however, that the
membership problem for this class is undecidable. Moreover, it is not
a compositional property. The iterator construct is slightly more
expressive than the finite-control class of processes. The latter can
be encoded using iterators and the communication primitives. But it is
also possible to encode behaviors in which the number of active
threads changes along iterations (although their number must be
bound). Unlike the $\pi$-graphs, only the reduction semantics for
closed systems are considered in~\cite{DBLP:conf/concur/MeyerG09}. As
explained in~\cite{gadducci:pigraphenc:mscs:2006}, the switch from the
reduction to the transition semantics is not trivial.  Recent works,
e.g.~\cite{DBLP:journals/iandc/BonchiGK09}, suggest the use of
\emph{borrowed contexts} (BC) to derive transition systems (and
bisimulation congruence) from graph grammars. In the $\pi$-graphs, we
propose an alternative technique of deriving transition labels from
node attributes, which we find simpler. However, we cannot
derive any congruence result from the construction. To our knowledge
the $\pi$-calculus has not been fully characterized in the BC
framework.

\section{Conclusion and future work}

The $\pi$-graphs is a visual paradigm for the modelling and
verification of mobile systems. It has constructs very close to the
$\pi$-calculus, although strictly speaking it is more a variant than a
graphical encoding. We plan to establish stronger connections between
(traditional) variants of the $\pi$-calculus and the $\pi$-graphs. In
particular, we conjecture $\pi$-graph bisimilarity to be close to late
congruence. For the latter, it seems cumbersome to work directly with the
$\pi$-graphs, because they involve relatively complex process
contexts. A privileged direction would be to translate the graphs back
into a variant of the $\pi$-calculus, and study the meta-theory at
that level.

For verification purposes, the $\pi$-graphs with iterators enjoy appealing
properties: the semantics rely on ground notions of transition and
bisimulation, and their state-space is finite by construction. However
the size of the LTS can be exponentially larger than the initial
graphs. To cope with this state explosion problem, we plan to
complement the traditional interleaving semantics developed in this
paper by more causal semantics. An interesting approach is to
\emph{slice} the semantics by analyzing independently each iterator of
a graph. Instead of interleaving the slices it is possible to relate
them in a causal way, considering the fact that the only transitions
across iterators are synchronizations.  Seen as an intermediate model,
the $\pi$-graphs - in particular the iterator construct - offer a
major simplification to our own Petri net translation of the
$\pi$-calculus~\cite{devillers-klaudel-koutny08}.  We think a
lower-level Petri net model can be used in the translation, with better
dispositions for verification using existing Petri net tools.


Last but not least, we plan to integrate the static variant of the 
$\pi$-graphs, as presented in this paper, in our prototype
tool available online\footnote{cf. \url{http://lip6.fr/Frederic.Peschanski/pigraphs}.}.

\bibliographystyle{eptcs}
\bibliography{pigraphs}

\end{document}